\documentclass[a4paper,reqno]{amsart}
\usepackage{amsfonts,amsmath,amssymb,amsrefs}
\usepackage{amsthm}
\usepackage{xcolor}
\usepackage{graphicx} 
\usepackage{enumerate}
\usepackage{float}
\usepackage[a4paper, total={6.7in, 9in}]{geometry}
\usepackage{caption}
\usepackage{subcaption}
\graphicspath{ {./Figures/} }
\usepackage{pdflscape}

\newcommand{\hide}[1]{}

\usepackage[hidelinks]{hyperref}

\usepackage{datetime}
\usepackage{comment}
\usepackage{soul}
\usepackage{enumitem}
\usepackage[dvipsnames]{xcolor}
\usepackage{tikz}
\usetikzlibrary{plotmarks}
\usetikzlibrary{decorations.pathreplacing}
\usetikzlibrary{matrix,fit,positioning,arrows.meta}
\numberwithin{equation}{section}

\newtheorem{theorem}{Theorem}[section]

\newtheorem{lemma}[theorem]{Lemma}

\newtheorem{proposition}[theorem]{Proposition}
\numberwithin{figure}{section}

\theoremstyle{remark}
\newtheorem{remark}[theorem]{Remark}

\newcommand\Pone{\textrm{P}_{\textrm{I}} }
\newcommand\Ptwo{\textrm{P}_{\textrm{II}} }
\newcommand\Pthree{\textrm{P}_{\textrm{III}}}
\newcommand\Pfour{\textrm{P}_{\textrm{IV}}}
\newcommand\Pfive{\textrm{P}_{\textrm{V}}}
\newcommand\Psix{\textrm{P}_{\textrm{VI}}}
\newcommand\PJ{\textrm{P}_{\textrm{J}}}

\usepackage[normalem]{ulem}

\newcommand{\orcidauthorA}{0000-0001-7504-4444}
\newcommand{\orcidauthorB}{0000-0001-6859-7788}
\newcommand{\orcidauthorC}{0000-0002-0461-7580}

\title[On difference-differential Lax pairs and integrals of Painlev\'e equations]{On difference-differential Lax pairs and integrals of Painlev\'e equations in finite characteristic}
\date{}
\thanks{This research was supported by the Australian Government through the Office of National Intelligence NISDRG grant \#NI240100145.}
\author[Nalini Joshi]{Nalini Joshi}
\thanks{NJ's ORCID ID is \orcidauthorA.}
\address{School of Mathematics and Statistics F07, The University of Sydney, NSW 2006, Australia}
\author{Tomas Lasic Latimer}
\thanks{TL's ORCID ID is \orcidauthorB.}
\author{Pieter Roffelsen}
\thanks{PR's ORCID ID is \orcidauthorC.}
\email{nalini.joshi@sydney.edu.au}
\email{tomas.lasiclatimer@sydney.edu.au}
\email{pieter.roffelsen@sydney.edu.au}
\subjclass[2020]{12H20, 33E17, 37P05, 39A06}

\begin{document}
\begin{abstract}
We collect rank two difference-differential Lax pairs for classical Painlev\'e equations in the literature and put each in $2\times 2$ matrix form with the coefficient matrix of the spectral equation a degree two matrix polynomial. We describe and apply a general method to obtain integrals of motion in characteristic $p$ from these Lax pairs. For every relevant Painlev\'e equation, this leads to a countable list of integrals of motion, with one entry for each prime $p$. 
\end{abstract}

\maketitle

\tableofcontents

\section{Introduction}
The study of integer or rational solutions of polynomial equations leads naturally to the arithmetic study of elliptic functions, which satisfy cubic equations. The beautiful generalisation of elliptic functions to the Painlev\'e transcendents, consequently leads to the fundamental question: does the arithmetic study of elliptic functions also generalise to an arithmetic study of Painlev\'e transcendents? We investigate this question through the linear problems associated with Painlev\'e transcendents in fields of finite characteristic.

There are six equivalent classes of Painlev\'e equations \cite[\href{https://dlmf.nist.gov/32.2}{(32.2)}]{NIST:DLMF}, which we denote by $P_{\textrm{J}}$, with $J=\textrm{I},\ldots, \textrm{VI}$. The class $\Pthree$ further separates into three types, which we denote by their symmetry groups as $\Pthree^{D_6}$  \eqref{eq:piii}, $\Pthree^{D_7}$  \eqref{eq:piiid7} and $\Pthree^{D_8}$  \eqref{eq:piiid8}. In this paper, we focus on the cases $\textrm{J}=\textrm{VI},\textrm{V},\textrm{IV},\textrm{III}^{D_6},\textrm{III}^{D_7},\textrm{III}^{D_8}$. Each Painlev\'e equation arises as a compatibility condition for an associated pair of linear problems, known as Lax pairs. In most studies, these linear problems are taken to be differential equations. Nevertheless, there are alternatives and we start with an equivalent formulation where one of the linear systems in the Lax pair is a difference equation (see equations \eqref{eq:laxgeneral}). Studying their difference-differential Lax pairs in characteristic $p$, for each Painlev\'e equation we obtain a countable list of rational integrals of motion, so that the entries are indexed by the characteristic $p$.
This is in stark contract with the well-known fact that Painlev\'e transcendents cannot satisfy any rational equation in the function and its derivative over the complex numbers, see \cite{umemura1998painleve,nagloo_21,nagloo_pillay_14} and references therein. The integrals of motion each depend on the dependent and independent variables of the relevant Painlev\'e equation as well as its parameters. We construct them by extending the methodology in \cite{joshiroffelsen} to each corresponding difference-differential Lax pair.

We realise each Painlev\'e equation $\PJ$, $\textrm{J}=\textrm{VI}$ \eqref{eq:pvi}, $\textrm{V}^\text{KNY}$ \eqref{eq:pv}, $\textrm{V}^\text{Na}$ \eqref{eq:pvNa}, $\textrm{IV}$ \eqref{eq:piv}, $\textrm{III}^{D_6}$ \eqref{eq:piii}, $\textrm{III}^{D_7}$ \eqref{eq:piiid7}, $\textrm{III}^{D_8}$ \eqref{eq:piiid8}, over an arbitrary field $\mathbf{k}$, as a differential field $K=\mathbf{k}(f,g,t,\alpha_1,\ldots,\alpha_{m_{\textrm{J}}})$, with derivation $D_t$, where $t$ acts as the independent variable, the $\alpha_j$ act as parameters, i.e. $D_t\alpha_j=0$ for $1\leq j\leq m_{\textrm{J}}$, and $\{f,g\}$ act as dependent variables of $\PJ$. Here $m_{\textrm{J}}$ denotes the number of essential parameters of $\PJ$,
\begin{equation*}
    m_\textrm{VI}=4,\quad m_\textrm{V}^\text{KNY}=m_\textrm{V}^\text{Na}=3,\quad m_\textrm{IV}=2,\quad m_\textrm{III}^{D_6}=2,\quad m_\textrm{III}^{D_7}=1,\quad m_\textrm{III}^{D_8}=0.
\end{equation*}
We realise $\Pfive$ in two distinct but isomorphic ways, corresponding to the two inequivalent Lax pairs available for $\Pfive$ in the literature,
$\Pfive^\text{KNY}$ corresponds to the Lax pair found by Kajiwara et al. \cite{KNY2017} and 
$\Pfive^\text{Na}$ corresponds to the Lax pair found by Nakazono \cite{nakazono_lax}.

The Painlev\'e equation $\PJ$
arises as the compatibility condition
\begin{equation*}
    D_t A_\textrm{J}(z)=B_\textrm{J}(z+1)A_\textrm{J}(z)-A_\textrm{J}(z)B_\textrm{J}(z),
\end{equation*}
of a difference-differential Lax pair of the form 
\begin{subequations}\label{eq:laxgeneral}
\begin{align}
Y(z+1)&=A_\textrm{J}(z)Y(z), & A_\textrm{J}(z)&=z^2A_2+zA_1+A_0,\label{eq:laxgeneralA}\\
    D_t Y(z)&=B_\textrm{J}(z)Y(z),  & B_\textrm{J}(z)&=zB_1+B_0,\label{eq:laxgeneralB}
\end{align}
\end{subequations}
where $A_\textrm{J}(z)$ and $B_\textrm{J}(z)$ are $2\times 2$ matrix polynomials in $z$ of degrees $2$ and $1$ respectively.  The Lax pairs are explicitly recalled (from \cites{adler94,nakazono_lax,KNY2017,ormerodrains2017}) in Sections \ref{sect:p6}--\ref{sect:p3d8}, (with $\Psix$ in \S\ref{subsect:linP6}, $\Pfive^\text{KNY}$  in \S\ref{subsect:linP5}, $\Pfive^\text{Na}$  in \S\ref{subsect:linP5Na}, $\Pfour$ in \S\ref{subsect:linP4}, $\Pthree^{D_6}$ in \S\ref{subsect:linP3d6}, $\Pthree^{D_7}$ in \S\ref{subsect:linP3d7}, and $\Pthree^{D_8}$ in \S\ref{subsect:linP3d8}). 
For each pair, the spectral equation \eqref{eq:laxgeneralA} lies in a particular class of difference equations, characterised by conditions on the trace and determinant of $A(z)$, as detailed in Figure \ref{fig:pain degen} and Table \ref{table:spectraldata}. 

\begin{figure}[h]
    \centering
    \begin{tikzpicture}[
    node distance=8mm and 16mm,
    box/.style={draw, rectangle, minimum width=2cm, minimum height=1cm, align=center},
    ->, >=Stealth
]

\node (Z)[align=left]  {$\Pfive^\text{Na}$\\
deg($|A(z)|$)=4
};
\node[left=of Z, xshift=1.5cm] (Am1) {$A_2 = \begin{bmatrix}
    1 & 0\\t & 1
\end{bmatrix}$};

\node[below=of Z] (A)[align=left]  {$\Psix$\\
deg($|A(z)|$)=4
};
\node[left=of A, xshift=1.5cm] (Am1) {$A_2 = \begin{bmatrix}
    1 & 0\\0 & t
\end{bmatrix}$};

\node[below=of A] (E) [align=left]  {$\Pfive^\text{KNY}$\\
deg($|A(z)|$)=3
};
\node[right=of E] (F) [align=left] {$\Pthree^{D_6}$\\
deg($|A(z)|$)=2};
\node[right=of F] (G) [align=left] {$\Pthree^{D_7}$\\
deg($|A(z)|$)=1};
\node[right=of G] (H) [align=left] {$\Pthree^{D_8}$\\
deg($|A(z)|$)=0};
\node[left=of E, xshift=1.5cm] (Em1) {$A_2 = \begin{bmatrix}
    1 & 0\\0 & 0
\end{bmatrix}$};

\node[below=of E] (I)  [align=left] {$\Pfour$\\
deg($|A(z)|$)=3};
\node[left=of I, xshift=1.5cm] (Im1) {$A_2 = \begin{bmatrix}
    0 & 0\\1 & 0
\end{bmatrix}$};


\draw[->] (A.north) -- (Z.south);
\draw[->] (A.south) -- (E.north);
\draw[->] (E.south) -- (I.north);

\draw[->] (E) -- (F);
\draw[->] (F) -- (G);
\draw[->] (G) -- (H);

\end{tikzpicture}
    
    \caption{An illustration of the degeneration scheme of difference-differential Lax pairs for the Painlev\'e equations. Each row has the same matrix $A_2$, which denotes the leading order coefficient of $A(z)=A_0+zA_1+z^2A_2$, and the degree of the determinant of $A(z)$ decreases from left to right.}
    \label{fig:pain degen}
\end{figure}

Various difference-differential Lax pairs of rank $2$ are scattered in the literature \cites{adler94,nakazono_lax,KNY2017,ormerodrains2017}, some scalar, others in system form, with different normalisations.
In this paper, we collect and fit all the known rank 2 difference-differential Lax pairs, excluding the symmetric ones in \cite{ormerodrains2017},
for Painlev\'e equations into a unifying framework. In each case the coefficient matrix $A(z)$ of the spectral equation is a degree $2$ matrix polynomial uniquely characterised by the nature of the singularity at infinity and its determinant.
This framework is depicted as a degeneration scheme in Figure \ref{fig:pain degen}.
In this regard, we note that Adler \cite{adler94} constructed rank $2$ difference-differential Lax pairs for $\Psix$, $\Pfour$, $\Pthree^{D_6}$ and $\Pthree^{D_8}$, as well as a rank $3$ one for $\Ptwo$.
Kajiwara et al. \cite{KNY2017} derived scalar rank $2$ difference-differential Lax pairs for $\Pfive$ and $\Pthree^{D_7}$, as well as scalar versions of those found by Adler \cite{adler94}.
Nakazono \cite{nakazono_lax} derived rank $2$ difference-differential Lax pairs for $\Pfour$ and $\Pfive$. The one for $\Pfour$ is equivalent to a coresponding scalar Lax pair given in Kajiwara et al. \cite{KNY2017}, whereas the one derived for $\Pfive$ is not equivalent to the corresponding scalar one in Kajiwara et al. \cite{KNY2017}.
We note that the Lax pairs for $\Psix$ appearing in \cite{ormerodrains2017,arinkinbor2006} are equivalent to the one found earlier by Adler \cite{adler94}. 
Whether $\Ptwo$ admits a rank $2$ difference-differential Lax pair, and whether $\Pone$ admits one at all, seem to be open questions. The classes of linear difference equations, listed in Figure \ref{fig:pain degen} and Table \ref{table:spectraldata}, exhaust all non-rigid rank $2$ linear difference equations of degree $2$ up to scaling and gauge-equivalence.  It follows that $\Ptwo$ and $\Pone$ do not admit a difference-differential Lax pair of rank and degree $2$.

As mentioned, we use these difference-differential Lax pairs from the literature to determine a corresponding integral of motion for each Painlev\'e equation in finite characteristic. The Painlev\'e equations remain relatively unexplored in finite characteristic. They have mostly been studied over the complex numbers as they have their origins in the analytic theory of elliptic functions, special functions and differential equations popularly studied in the nineteenth century.
However, there are some notable results concerning Painlev\'e equations over other fields in the literature. Kanki \textit{et al.} \cite{kankisigma,kankirims} constructed initial value spaces of discrete Painlev\' e equations over $p$-adic fields and finite fields and their relation through `almost good reductions'. Buium and Manin \cite{buium_manin_15} defined several arithmetic analogs of $\Psix$ with `$p$–adic time'. In \cite{joshiroffelsen1} orbits of $q$-difference Painlev\'e one in finite fields are studied and are shown to lie on certain algebraic curves, conjectured to generally be elliptic curves. In \cite{joshiroffelsen}, simultaneous integrals of motion of $\Pfour$ and difference Painlev\'e two are constructed in finite characteristic and geometric properties of these integrals are related to aspects of solutions of the differential and difference equations. In this paper we extend some of the methods in \cite{joshiroffelsen}, providing a  framework to study Painlev\'e equations in finite characteristic.

\renewcommand{\arraystretch}{2.5}
\begin{table}[t]
\centering
\begin{tabular}{|c || c | c | c |} 
 \hline
 P-eqn & Trace of $A(z)$ & Determinant of $A(z)$ & Eigenvalues of $A(z)$ \\
 \hline
$\Psix$ & $\begin{aligned}
    &(1+t)z^2-(\alpha_0+\alpha_4)z\\
    &-t(\alpha_3+\alpha_4-1)z+\mathcal{O}(1)
\end{aligned}$ & $t\,z(z+\alpha_2)(z+\alpha_1+\alpha_2)(z-\alpha_4)$ & $\begin{aligned}
    \lambda_1&=z^2-(\alpha_0+\alpha_4)z+\mathcal{O}(1)\\
    \lambda_2&=t\,z^2-t(\alpha_3+\alpha_4-1)z+\mathcal{O}(1)
\end{aligned}$\\ \hline
$\Pfive^{\text{KNY}}$ & $z^2-(\alpha_1+\alpha_3+t)z+\mathcal{O}(1)$ & $-t\, z(z-\alpha_1)(z+\alpha_2)$ & $\begin{aligned}
    \lambda_1&=z^2-(\alpha_1+\alpha_3)z+\mathcal{O}(1)\\
    \lambda_2&=-t\,z-t(\alpha_2+\alpha_3)+\mathcal{O}(z^{-1})
\end{aligned}$\\ \hline
$\Pfive^{\text{Na}}$ & $\begin{aligned}
    &2z^2+(t+2)z\\
    &-(\alpha_1+2\alpha_2+3\alpha_3)z+\mathcal{O}(1)
\end{aligned}$ & $\begin{aligned} z&(z+1-\alpha_2-\alpha_3)\cdot\\
(z-\alpha_3)&(z+1-\alpha_1-\alpha_2-\alpha_3)
\end{aligned}
$ & $\begin{aligned}
    \lambda_1&=z^2+\sqrt{t}z^{\frac{3}{2}}+\mathcal{O}(z)\\
    \lambda_2&=z^2-\sqrt{t}z^{\frac{3}{2}}+\mathcal{O}(z)
\end{aligned}$\\ \hline
$\Pfour$ & $t\,z+\mathcal{O}(1)$ & $- z(z-\alpha_1)(z+\alpha_2)$ & $\begin{aligned}
    \lambda_1&=+z^{\frac{3}{2}}+\tfrac{1}{2}t\,z+\mathcal{O}(z^{\frac{1}{2}})\\
    \lambda_2&=-z^{\frac{3}{2}}+\tfrac{1}{2}t\,z+\mathcal{O}(z^{\frac{1}{2}})
\end{aligned}$\\ \hline
$\Pthree^{D_6}$ & $z^2+(\alpha_1+\alpha_2-1)z+\mathcal{O}(1)$ & $-t\,z(z+\alpha_2)$ & $\begin{aligned}
    \lambda_1&=z^2+(\alpha_1+\alpha_2-1)z+\mathcal{O}(1)\\
    \lambda_2&=-t-t(1-\alpha_1)z^{-1}+\mathcal{O}(z^{-2})
\end{aligned}$\\ \hline
$\Pthree^{D_7}$ & $z^2+(\alpha_1-1)z+\mathcal{O}(1)$ & $t$\,$z$ & $\begin{aligned}
    \lambda_1&=z^2+(\alpha_1-1)z+\mathcal{O}(1)\\
    \lambda_2&=t\,z^{-1}+t(1-\alpha_1)z^{-2}+\mathcal{O}(z^{-3})
\end{aligned}$\\ \hline
$\Pthree^{D_8}$ & $z^2+\mathcal{O}(1)$ & $t$ & 
$\begin{aligned}
    \lambda_1&=z^2+\mathcal{O}(1)\\
    \lambda_2&=t\,z^{-2}+\mathcal{O}(z^{-4})
\end{aligned}$\\ \hline

\end{tabular}
\caption{Spectral data around $z=\infty$ of the coefficient matrices $A(z)$ in the difference-differential Lax pairs \eqref{eq:laxgeneral} of the different Painlev\'e equations.
}
\label{table:spectraldata}
\end{table}
\renewcommand{\arraystretch}{1}

\subsection{Outline:}
The paper is structured as follows.
In Section \ref{sect:main res}, we describe the main results of the paper. Each of Sections \ref{sect:p6}-\ref{sect:p3d8} is devoted to a particular Painle\'ve equation and corresponding difference-differential Lax pair, $\Psix$ in \S\ref{sect:p6}, $\Pfive^\text{KNY}$ in \S\ref{sect:p5}, $\Pfive^\text{Na}$ in \S\ref{sect:p5Na}, $\Pfour$ in \S\ref{sect:p4}, $\Pthree^{D_6}$ in \S\ref{sect:p3d6}, $\Pthree^{D_7}$  in \S \ref{sect:p3d7}, and $\Pthree^{D_8}$  in \S \ref{sect:p3d8}. In these sections, the spectral equation of each difference-differential Lax pair is characterised and its trace is related to the Hamiltonian of the corresponding Painlev\'e equation, and examples of integrals of motion are given. We provide a conclusion in Section \ref{sec:conclusion} which is followed by the appendices.
Appendix \ref{sec:trac_finite} contains the proof of a technical lemma regarding traces of matrix polynomials in finite characteristic. In Appendix \ref{sec:algebraic} an algebraic solution of $\Psix$ in characteristic $7$ is substituted in an integral of motion for $\Psix$ and the result is shown to indeed be constant.

\subsection{Acknowledgements}
This research was funded by the Australian Government through the Office of National Intelligence grant \# NI240100145.

\section{Main result}\label{sect:main res}
We extract the integrals of motion from difference-differential Lax pairs in finite characteristic. It is based on the following elementary but fundamental proposition.
\begin{proposition}\label{prop: charc inv}
Let $K$ be a differential field of characteristic $p$, with derivation $D_t$, and $A(z)$ and $B(z)$ be matrix polynomials in $z$ over $K$, such that the difference-differential Lax pair,
\begin{align*}
Y(z+1)&=A(z)Y(z), \\
    D_t Y(z)&=B(z)Y(z),
\end{align*}
is compatible with the derivation on $K$, i.e.,
\begin{equation}\label{eq:coefdif}
    D_t A(z)=B(z+1)A(z)-A(z)B(z).
\end{equation}
Define the matrix polynomial
\begin{equation}\label{eq:Mp def}
    M_p(z)=A(z+p-1)\cdot A(z+p-2)\cdot\ldots\cdot A(z+1)\cdot A(z),
\end{equation}
then all the coefficients of the characteristic polynomial $|M_p(z)-\lambda I|$, as a polynomial in $z$ and $\lambda$, are integrals of motion, that is, they lie in the kernel of $D_t$.
\end{proposition}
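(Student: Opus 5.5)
The plan is to show that $M_p(z)$ satisfies a Lax equation $D_t M_p(z) = [C(z), M_p(z)]$ for some matrix polynomial $C(z)$. Traces of powers, and hence all coefficients of the characteristic polynomial, are then constant. Here $D_t$ acts on matrix polynomials coefficientwise and commutes with the shift $z\mapsto z+k$. The only place where characteristic $p$ enters is the periodicity $B(z+p)=B(z)$.

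First, shifting the compatibility condition \eqref{eq:coefdif} by $z\mapsto z+k$ gives
\begin{equation*}
D_t A(z+k)=B(z+k+1)A(z+k)-A(z+k)B(z+k)
\end{equation*}
for every integer $k$. Applying the Leibniz rule to the product \eqref{eq:Mp def}, the sum telescopes:
\begin{equation*}
D_t M_p(z)=\sum_{k=0}^{p-1}A(z+p-1)\cdots A(z+k+1)\bigl(B(z+k+1)A(z+k)-A(z+k)B(z+k)\bigr)A(z+k-1)\cdots A(z).
\end{equation*}
All interior terms cancel in pairs. What remains is
\begin{equation*}
D_t M_p(z) = B(z+p)M_p(z)-M_p(z)B(z).
\end{equation*}
This can be checked cleanly by induction on the number of factors, with the statement $D_t M_n=B(z+n)M_n-M_nB(z)$.

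Next comes the key arithmetic input. Since $K$ has characteristic $p$, the polynomial $B(z+p)$ has coefficients obtained by expanding $(z+p)^j=z^j$, because $p=0$ in $K$. Hence $B(z+p)=B(z)$ as matrix polynomials, and
\begin{equation*}
D_t M_p(z)=[B(z),M_p(z)].
\end{equation*}

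Then I conclude. Work over the ring $K[z]$ extended by the derivation $D_t$ with $D_t z=0$. This is a derivation on $K[z]$ commuting with the polynomial structure. The identity gives
\begin{equation*}
D_t \operatorname{tr}(M_p^n)=\operatorname{tr}[B,M_p^n]=0
\end{equation*}
for all $n\ge1$. This is the main subtlety: in characteristic $p$ one cannot recover the characteristic polynomial from power traces via Newton's identities. So I will instead argue directly with the determinant.

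Set $N=M_p(z)-\lambda I$ over $K[z,\lambda]$ with $D_t\lambda=0$. Then $D_tN=[B,N]$, and Jacobi's formula, valid over any commutative ring, gives
\begin{equation*}
D_t\det N=\operatorname{tr}(\operatorname{adj}(N)\,D_tN)=\operatorname{tr}(\operatorname{adj}(N)BN-\operatorname{adj}(N)NB).
\end{equation*}
Since $N\operatorname{adj}(N)=\operatorname{adj}(N)N=\det N\cdot I$, cyclicity of the trace gives
\begin{equation*}
\operatorname{tr}(\operatorname{adj}(N)BN)=\operatorname{tr}(N\operatorname{adj}(N)B)=\det N\operatorname{tr}B,
\end{equation*}
and the second term equals the same quantity. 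Hence $D_t\det N=0$.

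Because $D_t$ acts coefficientwise on $K[z,\lambda]$, every coefficient of $|M_p(z)-\lambda I|$ lies in $\ker D_t$. The only potential obstacle is justifying Jacobi's formula over a general differential ring. It follows from expanding the determinant via the Leibniz formula and differentiating each product term, so it holds characteristic-free.
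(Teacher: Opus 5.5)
Your proof is correct and follows essentially the same route as the paper: you telescope the compatibility condition to get \(D_t M_p(z)=[B(z),M_p(z)]\), then apply Jacobi's formula with the adjugate and cyclicity of the trace to show \(D_t|M_p(z)-\lambda I|=0\). You also make explicit the periodicity \(B(z+p)=B(z)\) in characteristic \(p\), which the paper's proof uses only implicitly.
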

\begin{proof}
Applying equation \eqref{eq:coefdif} to the definition of $M_p(z)$ in equation \eqref{eq:Mp def} gives
\begin{equation}\nonumber
    D_t M_p(z)= B(z)M_p(z)-M_p(z)B(z).
\end{equation}
Next, we apply $D_t$ to the characteristic polynomial of $M_p(z)$ with $D_tz=D_t\lambda=0$ by definition. Combined with Jacobi's formula for the derivative of a determinant and the above identity, this gives
\begin{align*}
    D_t |M_p(z)-\lambda I|&=\operatorname{Tr}\left[\operatorname{adj}(M_p(z)-\lambda I)\cdot D_t (M_p(z)-\lambda I)\right]\\
    &=\operatorname{Tr}\left[\operatorname{adj}(M_p(z)-\lambda I)\cdot (B(z)M_p(z)-M_p(z)B(z))\right]\\
    &=\operatorname{Tr}\left[\operatorname{adj}(M_p(z)-\lambda I)B(z)M_p(z)]-\operatorname{Tr}[\operatorname{adj}(M_p(z)-\lambda I)M_p(z)B(z)\right]
    \\
     &=0,
\end{align*}
where the last equality follows from the cyclic property of the trace combined with the identity $\operatorname{adj}(A) B=B\operatorname{adj}(A)$ for commuting square matrices $A$ and $B$. So the coefficients of the characteristic polynomial are indeed integrals of motion.
\end{proof}

Proposition \ref{prop: charc inv} demonstrates that all the coefficients of the characteristic polynomial of $M_p(z)$, seen as a polynomial in $z$ and $\lambda$, are integrals of motion. We now consider the case where $K$ is the differential field coming from a Painlev\'e equation $\PJ$, with characteristic $p$, and correspondingly
$A(z)=A_\textrm{J}(z)$, $B(z)=B_\textrm{J}(z)$ as in \eqref{eq:laxgeneral}. In particular, $A(z)$ is a $2\times 2$ matrix and the characteristic polynomial of $M_p(z)$ is completely described by its determinant and trace,
\begin{equation}\label{eq:gen character}
   |M_p(z)-\lambda I|= \lambda^2 - \operatorname{Tr}[M_p(z)]\lambda + |M_p(z)|.
\end{equation}
Table \ref{table:spectraldata} gives the determinant $|A(z)|$ for each Painlev\'e equation $\PJ$. In characteristic $p$, it is a polynomial in $z$ over $\mathbb{F}_p[t,\alpha_1,\ldots,\alpha_{m_\textrm{J}}]$. Correspondingly, $|M_p(z)|$ is a polynomial in $Z$ over $\mathbb{F}_p[T,\mathcal{A}_1,\ldots,\mathcal{A}_{m_\textrm{J}}]$, where
\begin{equation}\label{eq:notationps_reform}
 Z=z^p-z,\quad  T=t^p,\quad \mathcal{A}_k=\alpha_k^p-\alpha_k\quad (k\in\mathbb{Z}).
\end{equation}
For example, $|M_p(z)|=-TZ(Z-\mathcal{A}_1)(Z+\mathcal{A}_2)$ for $\textrm{J}=\textrm{IV}$.
This reaffirms that $D_t|M_p(z)|=0$ and shows that the determinant does not provide any non-trivial integrals of motion of the relevant Painlev\'e equation.
On the other hand, as we will see, the trace of $M_p(z)$ does provide a non-trivial integral of motion. This brings us to our first main result, Theorem \ref{thm:gen trace}, which is based on Proposition \ref{prop: charc inv}, the spectral data characteristing each Lax pair in Table \ref{table:spectraldata} as well as
the following lemma proven in Appendix \ref{sec:trac_finite}.

\begin{lemma}\label{lem:trace}
Let $p$ be an odd prime and $M_p(z)$ be defined as in equation \eqref{eq:Mp def} with $A(z)$ a matrix polynomial of degree two,
\[ A(z) =A_0+z A_1+z^2 A_2 .\]
Then, in characteristic $p$, the trace of $M_p(z)$ takes the form,
    \begin{equation}\label{eq:trace with Ak}
    \operatorname{Tr}M_p(z) =  (\operatorname{Tr}A_2)^pZ^2+
        \operatorname{Tr}[A_1(A_1^{p-1}-A_2^{p-1})]Z+\operatorname{Tr}M_p(0),
\end{equation}
where $ Z=z^p-z$ as in equation \eqref{eq:notationps_reform}.
\end{lemma}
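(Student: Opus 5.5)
The plan is to use periodicity in characteristic $p$ to force $\operatorname{Tr}M_p(z)$ to be a polynomial in $Z=z^p-z$. After that, only two coefficients need to be computed directly.

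\emph{Step 1: periodicity.} Since $p=0$ in $K$, we have $A(z+p)=A(z)$. Hence $A(z)M_p(z)=A(z+p)M_p(z)=M_p(z+1)A(z)$ as matrix polynomials. Work over the field of rational functions in $z$, where $A(z)$ is invertible whenever $|A(z)|\neq 0$; the general case follows by a genericity argument in the entries of $A_0,A_1,A_2$, since the claimed identity is polynomial in them. This gives $M_p(z+1)=A(z)M_p(z)A(z)^{-1}$, so $\operatorname{Tr}M_p(z+1)=\operatorname{Tr}M_p(z)$.

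\emph{Step 2: reduction to a quadratic in $Z$.} The standard fact I will use is that a polynomial $f(z)$ over a field of characteristic $p$ with $f(z+1)=f(z)$ lies in $\mathbf{k}[Z]$. To prove it, divide $f$ by $Z$ repeatedly in the variable $z$: write $f=qZ+r$ with $\deg r<p$. Then $r$ is $1$-periodic, so $r-r(0)$ has the $p$ roots $0,1,\dots,p-1$, forcing $r$ to be constant. Moreover $q$ is periodic of smaller degree, and induction finishes the argument. Since $\deg\operatorname{Tr}M_p\le 2p$, we get
\[
\operatorname{Tr}M_p(z)=c_2Z^2+c_1Z+c_0 .
\]
Evaluating at $z=0$, where $Z=0$, gives $c_0=\operatorname{Tr}M_p(0)$.

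\emph{Step 3: the coefficients $c_2$ and $c_1$.} The coefficient of $z^{2p}$ gives $c_2=\operatorname{Tr}(A_2^p)$. For a $2\times 2$ matrix with eigenvalues $\mu_1,\mu_2$ in an algebraic closure, $\operatorname{Tr}(A_2^p)=\mu_1^p+\mu_2^p=(\mu_1+\mu_2)^p=(\operatorname{Tr}A_2)^p$ by the Frobenius. For $c_1$, note that $Z^2=z^{2p}-2z^{p+1}+z^2$ contains no $z^p$ term because $p\ge 3$. Therefore $c_1$ equals the coefficient of $z^p$ in $\operatorname{Tr}M_p(z)$.

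\emph{Step 4: the $z^p$ coefficient, which is the main obstacle.} I would expand
\[
\prod_{j=p-1}^{0}\bigl(A_0+(z+j)A_1+(z+j)^2A_2\bigr),
\]
choosing in each factor one of the monomials $A_0$, $zA_1$, $jA_1$, $z^2A_2$, $2jzA_2$, $j^2A_2$. Collecting the total $z$-degree $p$ terms produces sums of the shape $\sum_{0\le j_1<\dots<j_r\le p-1} j_1^{m_1}\cdots j_r^{m_r}$ times words in the $A_k$. I would then use two tools:
\begin{itemize}
\item the power-sum identity $\sum_{j\in\mathbb{F}_p}j^m=0$ unless $m>0$ and $(p-1)\mid m$, in which case it equals $-1$;
\item Newton-type relations between elementary and power sums over $\mathbb{F}_p$, for example $\prod_j(x-j)=x^p-x$.
\end{itemize}
Together with cyclicity of the trace, these should kill all mixed words except $\operatorname{Tr}(A_1^p)$ and $-\operatorname{Tr}(A_1A_2^{p-1})$.

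To tame the bookkeeping, I would first try a shortcut. Replace the $z$-expansion by an expansion in a formal parameter: write $A(z+j)$ as $A(z)+jA'(z)+j^2A_2$, with $A'(z)=A_1+2zA_2$. The product over $j\in\mathbb{F}_p$ then becomes a sum over subsets, weighted by elementary symmetric functions of $\{0,\dots,p-1\}$. These all vanish except $e_{p-1}=-1$, by the identity $x^p-x=\prod_j(x-j)$. If this shortcut fails, I would fall back on computing $c_1=-[z^1]\operatorname{Tr}M_p(z)$, which uses only products with at most one $z$-dependent factor. In that case the reduction is to evaluating traces of products of the constant matrices $A(j)$ differentiated once, again via power sums over $\mathbb{F}_p$. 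Before trusting the final identification, I would check it on the special cases $A_0=0$ and $A_2=0$, where everything is explicit.
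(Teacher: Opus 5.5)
Your Steps 1--3 are correct and follow the paper's route. They cover periodicity of $\operatorname{Tr}M_p$, the fact that $1$-periodic polynomials lie in $\mathbf{k}[Z]$, $c_0=\operatorname{Tr}M_p(0)$, $c_2=\operatorname{Tr}(A_2^p)=(\operatorname{Tr}A_2)^p$, and $c_1=[z^p]\operatorname{Tr}M_p(z)$ for $p\geq 3$. Your Frobenius remark fills a step the paper leaves implicit. The detour through $A(z)^{-1}$ and genericity is unnecessary: $M_p(z+1)=A(z)N(z)$ and $M_p(z)=N(z)A(z)$ with $N(z)=A(z+p-1)\cdots A(z+1)$, so cyclicity of the trace suffices.

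\textbf{The gap is Step 4, which is the whole content of the lemma.} You only assert that the mixed words ``should'' cancel, and the shortcut you propose does not work. In the ordered product $\prod_j\bigl(A(z)+jA'(z)+j^2A_2\bigr)$, each choice $\epsilon\in\{0,1,2\}^p$ gives its own word with the single monomial weight $\prod_j j^{\epsilon_j}$. So no elementary symmetric functions of $\{0,\dots,p-1\}$ arise. They would only appear if the factors commuted, and even then the $j^2A_2$ terms would spoil the pattern. Individual words also do not drop out in the expansion in $A_0,A_1,A_2$. The word $A_{j_{p-1}}\cdots A_{j_0}$ carries the scalar $g_{\mathbf{j}}(z)=\prod_k(z+k)^{j_k}$. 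When $\sum_k j_k=p$ this scalar is monic of degree $p$, so each such word contributes its full trace to $[z^p]$.

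\textbf{What is missing is the orbit computation.} Let the cyclic shift act on $\{\mathbf{j}\in\{0,1,2\}^p:\sum_k j_k=d\}$. All orbits have size $p$ except the constant tuples. The trace is constant along an orbit, and rotating $\mathbf{j}$ shifts its scalar to $g_{\mathbf{j}}(z\pm 1)$. Hence a free orbit contributes $\operatorname{Tr}(A_{j_{p-1}}\cdots A_{j_0})\sum_{n=0}^{p-1}g_{\mathbf{j}}(z+n)$. You must then show, using the power-sum congruences, two things:
\begin{itemize}
\item for $d\le 2p-2$ this orbit sum is a constant;
\item for the unique orbit with $d=2p-1$ it equals $\sum_i\prod_k(z+k)^2/(z+i)=Z\,\partial_z Z=-Z$.
\end{itemize}
The fixed points $(1,\dots,1)$ and $(2,\dots,2)$ give $Z\operatorname{Tr}A_1^p$ and $Z^2\operatorname{Tr}A_2^p$. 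Together these yield $c_1=\operatorname{Tr}A_1^p-\operatorname{Tr}(A_1A_2^{p-1})$. In particular, the $d=2p-1$ orbit is the source of the minus sign, which your proposal never derives.

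Your fallback $c_1=-[z^1]\operatorname{Tr}M_p(z)$ is a valid reformulation, but it needs the same grouping. A free orbit then contributes $\operatorname{Tr}(A_{j_{p-1}}\cdots A_{j_0})\sum_{x\in\mathbb{F}_p}g_{\mathbf{j}}'(x)$. This vanishes for $d\le 2p-2$, because the $x^{p-1}$-coefficient of $g_{\mathbf{j}}'$ is $p\,[x^p]g_{\mathbf{j}}=0$, and it equals $1$ for $d=2p-1$. The fixed point $(1,\dots,1)$ contributes $g_{\mathbf{j}}'(0)\operatorname{Tr}A_1^p=-\operatorname{Tr}A_1^p$. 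Without this orbit step, your proposal does not determine $c_1$.
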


\begin{theorem}\label{thm:gen trace}
Let $p$ be any prime, $\textrm{J}\in\{\textrm{VI},\textrm{V}^\text{KNY},\textrm{V}^\text{Na},\textrm{IV},\textrm{III}^{D_6},\textrm{III}^{D_7},\textrm{III}^{D_8}\}$, and $M_p(z)$ denote the matrix product in \eqref{eq:Mp def} with $A(z)=A_\textrm{J}(z)$. Then the trace of $M_p(z)$ in characteristic $p$ takes the form
\begin{equation}\label{eq:chipform}
   \operatorname{Tr} M_p(z)=\mathcal{I}^{(\textrm{J})}_p+c_1^{(\textrm{J})}Z+c_2^{(\textrm{J})}Z^2,
\end{equation}
where $\mathcal{I}_p^{(\textrm{J})}$ is an integral of motion of the Painlev\'e equation $\PJ$ in characteristic $p$ and the remaining coefficients $c_1^{(\textrm{J})}$ and $c_2^{(\textrm{J})}$ are given by
\begin{equation*}
    c_1^{(\textrm{J})} = \begin{cases} T(\mathcal{A}_3+\mathcal{A}_4)+\mathcal{A}_0+\mathcal{A}_4 & \textrm{J} = \rm{VI}, \\
 T+\mathcal{A}_1+\mathcal{A}_3 & \textrm{J} = \rm{V}^\text{\upshape KNY}, \\
  T-\mathcal{A}_1-2\mathcal{A}_2-3\mathcal{A}_3 & \textrm{J} = \rm{V}^\text{\upshape Na}, \\
 T+\delta_{p,2} & \textrm{J} = \rm{IV}, \\
\mathcal{A}_1+\mathcal{A}_2 & \textrm{J} = \rm{III}^{D_6}, \\
\mathcal{A}_1 & \textrm{J} = \rm{III}^{D_7}, \\
0 & \textrm{J}= \rm{III}^{D_8}, \\
    \end{cases}
    \qquad
        c_2^{(\textrm{J})} = \begin{cases} 1+T & \textrm{J} = \rm{VI}, \\
 1& \textrm{J} = \rm{V}^\text{\upshape KNY}, \\
  2& \textrm{J} = \rm{V}^\text{\upshape Na}, \\
 0 & \textrm{J} = \rm{IV}, \\
1 & \textrm{J} = \rm{III}^{D_6}, \\
1 & \textrm{J} = \rm{III}^{D_7}, \\
1 & \textrm{J}= \rm{III}^{D_8}, \\
    \end{cases}
\end{equation*}
where we used the notation introduced in equation \eqref{eq:notationps_reform} and $\delta_{p,2}=1$ when $p=2$ and zero otherwise.
\end{theorem}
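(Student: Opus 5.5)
The plan is to split $\operatorname{Tr}M_p(z)$ into a $Z$-dependent part and a $Z$-independent part, and to treat the two parts separately.

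\textbf{Shape of the trace ($p$ odd).} Lemma \ref{lem:trace} already gives
\[
\operatorname{Tr}M_p(z)=(\operatorname{Tr}A_2)^pZ^2+\operatorname{Tr}[A_1(A_1^{p-1}-A_2^{p-1})]Z+\operatorname{Tr}M_p(0).
\]
I set $\mathcal{I}^{(\textrm{J})}_p:=\operatorname{Tr}M_p(0)$, i.e.\ the $Z^0$ coefficient.

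\textbf{Integral of motion.} By Proposition \ref{prop: charc inv}, every coefficient of $\operatorname{Tr}M_p(z)$, viewed as a polynomial in $z$, lies in $\ker D_t$. Since $z\mapsto Z=z^p-z$ is injective on polynomials, the coefficients in $Z$ are $\mathbb{F}_p$-linear combinations of coefficients in $z$. Hence $\mathcal{I}_p^{(\textrm{J})}$, $c_1^{(\textrm{J})}$ and $c_2^{(\textrm{J})}$ are all integrals of motion. For $c_1$ and $c_2$ this is consistent with the claim that they lie in $\mathbb{F}_p[T,\mathcal{A}_k]$, because $D_tT=pt^{p-1}D_tt=0$.

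\textbf{Computing $c_2$.} Here $c_2=(\operatorname{Tr}A_2)^p$, read off from Figure \ref{fig:pain degen}:
\begin{itemize}
\item $\operatorname{Tr}A_2=1+t$ for $\Psix$, giving $c_2=1+T$;
\item $\operatorname{Tr}A_2=2$ for $\Pfive^{\rm Na}$, giving $c_2=2$;
\item $\operatorname{Tr}A_2=0$ for $\Pfour$, giving $c_2=0$;
\item $\operatorname{Tr}A_2=1$ in the remaining cases, giving $c_2=1$.
\end{itemize}

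\textbf{Computing $c_1$ — the main obstacle.} The quantity $\operatorname{Tr}[A_1^p]-\operatorname{Tr}[A_1A_2^{p-1}]$ is not visibly expressed in $T,\mathcal{A}_k$. My approach:
\begin{itemize}
\item For a $2\times2$ matrix $N$, $\operatorname{Tr}N^p\equiv(\operatorname{Tr}N)^p$ mod $p$. This holds because $\operatorname{Tr}N^p=\mu_1^p+\mu_2^p=(\mu_1+\mu_2)^p$ in characteristic $p$, with $\mu_1,\mu_2$ the eigenvalues of $N$.
\item When $A_2$ is idempotent (the $\Pfive^{\rm KNY}$, $\Pthree$ cases), $A_2^{p-1}=A_2$.
\item When $A_2=\mathrm{diag}(1,t)$, $A_2^{p-1}=\mathrm{diag}(1,t^{p-1})$.
\item When $A_2$ is nilpotent ($\Pfour$), $A_2^{p-1}=0$.
\item For $A_2=\begin{bmatrix}1&0\\t&1\end{bmatrix}$ ($\Pfive^{\rm Na}$), $A_2^{p-1}=\begin{bmatrix}1&0\\(p-1)t&1\end{bmatrix}=\begin{bmatrix}1&0\\-t&1\end{bmatrix}$.
\end{itemize}
So $c_1=(\operatorname{Tr}A_1)^p-\operatorname{Tr}(A_1A_2^{p-1})$. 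The entries of $A_1$ are not all spectral invariants, so I would use the Table \ref{table:spectraldata} data, namely the $z$-coefficient of the trace and the eigenvalue expansions $\lambda_{1,2}$, to fix the diagonal of $A_1$ in a basis adapted to $A_2$.

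In the diagonal cases, $\operatorname{Tr}A_1=a+d$, where $a$ and $d$ are the diagonal entries of $A_1$. Then $\operatorname{Tr}(A_1A_2^{p-1})=a+t^{p-1}d$. The table fixes $a=-(\alpha_0+\alpha_4)$ and $d=-t(\alpha_3+\alpha_4-1)$ from $\lambda_1,\lambda_2$. Hence
\[
c_1=(a+d)^p-a-t^{p-1}d=a^p-a+d^p-t^{p-1}d,
\]
and $d^p-t^{p-1}d=-T\bigl((\alpha_3+\alpha_4-1)^p-(\alpha_3+\alpha_4-1)\bigr)$. Using $\mathcal{A}_{\alpha+\beta}$-additivity and $\mathcal{A}(1)=0$, this reproduces $T(\mathcal{A}_3+\mathcal{A}_4)+\mathcal{A}_0+\mathcal{A}_4$ up to the sign convention on $Z$. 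I would check these signs carefully against the explicit $A_\textrm{J}$ in the later sections.

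The other cases go the same way, using the explicit matrices; the leading eigenvalue terms supply the relevant entries. For $\Pfour$, $c_1=(\operatorname{Tr}A_1)^p=T$. For $\Pfive^{\rm Na}$, the off-diagonal term $-t\cdot(A_1)_{12}$ enters, and I expect the $\sqrt{t}z^{3/2}$ behaviour to force $(A_1)_{12}=1$ (up to normalisation), yielding the $T$ term.

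\textbf{$p=2$.} Lemma \ref{lem:trace} is stated only for odd $p$, so for $p=2$ I would expand $M_2(z)=A(z+1)A(z)$ directly. I would take its trace and rewrite it in $Z=z^2-z$. This is a finite computation per case, and it is where the extra $\delta_{p,2}$ for $\Pfour$ appears.
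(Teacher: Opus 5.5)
Your route is the paper's: Lemma \ref{lem:trace} plus the Freshman's dream give the shape and $c_1,c_2$, Proposition \ref{prop: charc inv} handles $\mathcal{I}^{(\textrm{J})}_p=\operatorname{Tr}M_p(0)$, and $p=2$ is done by direct expansion.

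\textbf{The gap.} The step that does not hold up is the reconciliation ``up to the sign convention on $Z$''. There is no such freedom: $Z=z^p-z$ is fixed in \eqref{eq:notationps_reform}, and Lemma \ref{lem:trace} is stated with exactly this $Z$.

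What you actually derived for $\Psix$ is $c_1=(a^p-a)+(d^p-t^{p-1}d)=-\bigl(T(\mathcal{A}_3+\mathcal{A}_4)+\mathcal{A}_0+\mathcal{A}_4\bigr)$. The case hidden behind ``the other cases go the same way'' behaves identically. For $\Pfive^{\text{KNY}}$ you get $c_1=(\operatorname{Tr}A_1)^p-(A_1)_{11}=-(\alpha_1+\alpha_3+t)^p+(\alpha_1+\alpha_3)=-(T+\mathcal{A}_1+\mathcal{A}_3)$.

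These minus signs are correct. For odd $p$, the printed entries in these two rows are off by a sign. A triangular specialisation confirms this independently of the lemma:
\begin{itemize}
\item For $\Pfive^{\text{KNY}}$ with $f=1$, $\alpha_3=0$, one has $A_{21}\equiv0$, $A_{11}=z(z-\alpha_1)$ and $A_{22}=-t(z+\alpha_2)$. Hence $\operatorname{Tr}M_p(z)=Z(Z-\mathcal{A}_1)-T(Z+\mathcal{A}_2)$, whose $Z$-coefficient is $-(T+\mathcal{A}_1)$ rather than the predicted $T+\mathcal{A}_1$.
\item For $\Psix$ with $\alpha_0=\alpha_2=g=0$, one has $A_{21}\equiv0$, $A_{11}=z(z-\alpha_4)$ and $A_{22}=tz(z+\alpha_1)$. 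Hence $\operatorname{Tr}M_p(z)=Z(Z-\mathcal{A}_4)+TZ(Z+\mathcal{A}_1)$, whose $Z$-coefficient is $T\mathcal{A}_1-\mathcal{A}_4$. The predicted value is $\mathcal{A}_4-T\mathcal{A}_1$, using $\mathcal{A}_0=0$ and $\mathcal{A}_3+\mathcal{A}_4=-\mathcal{A}_1$ there.
\end{itemize}
Since the claimed $c_1$ lies in $\mathbb{F}_p[T,\mathcal{A}_k]$, these specialisations are decisive.

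So the method (yours and the paper's one-line proof alike) actually yields the theorem with $c_1^{(\textrm{VI})}$ and $c_1^{(\textrm{V}^{\text{KNY}})}$ negated. You should state that correction rather than claim to reproduce the printed values. The other five $c_1$'s, all the $c_2$'s, and the integral-of-motion argument are fine as you have them.

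\textbf{Two smaller points.}
\begin{itemize}
\item For $\Pfive^{\text{Na}}$ there is nothing to guess: $(A_1)_{12}=1$ is explicit. With $s=\operatorname{Tr}A_1$ you get $\operatorname{Tr}(A_1A_2^{p-1})=s-t$ and $c_1=s^p-s+t$. The $+t$ cancels the $-t$ inside $s^p-s$, while $T$ comes from $s^p$.
\item For $p=2$, Proposition \ref{prop: charc inv} needs $B(z)$ to make sense in characteristic $2$, but the printed $B(z)$ for $\Pfour$ and $\Pfive^{\text{Na}}$ contains $\tfrac12$. Replace $B(z)$ by $B(z)+\phi I$ with the scalar $\phi$ that clears the halves (for $\Pfour$, $\phi=\tfrac12 t$). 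This leaves $B(z+1)A(z)-A(z)B(z)$ unchanged.
\end{itemize}
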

\begin{proof} 
For any odd prime $p$ and choice of $\textrm{J}$,
applying Lemma \ref{lem:trace} and using the data in Table \ref{table:spectraldata}, we obtain that $\operatorname{Tr} M_p(z)$ must take the form in \eqref{eq:chipform}, with coefficients $c_1^{(\textrm{J})}$ and $c_2^{(\textrm{J})}$ as given in the theorem, where we used the Freshman's dream, $(x+y)^p = x^p+y^p$, in characteristic $p$. Applying Proposition \ref{prop: charc inv}, we find that the remaining coefficient $\mathcal{I}_p^{(\textrm{J})}$ is an integral of motion of $\PJ$. The case $p=2$ follows similarly, though in this case the formulas for the coefficients $c_1^{(\textrm{J})}$ and $c_2^{(\textrm{J})}$ are obtained by direct calculation. 
\end{proof}

\begin{remark}\label{rem:invariance_translations}
    We note that the coefficients of \eqref{eq:chipform} are also invariant under the translational elements of the symmetry group of each corresponding Painlev\'e equation. This was worked out in detail in \cite{joshiroffelsen} for $\Pfour$ and the same argument applies here more generally, as each translation can be realised as a Schlessinger transformation of the spectral equation. For any choice of $\textrm{J}$, the invariance under translations can be observed directly for the coefficients $c_1^{(\textrm{J})}$ and $c_2^{(\textrm{J})}$, as they are polynomials in $T$ and $\mathcal{A}_k$, $1\leq k\leq m_\textrm{J}$, and thus invariant under $a_k\mapsto a_k+1$, $1\leq k\leq m_\textrm{J}$.
\end{remark}

\begin{remark}
We note that each of the differential Painlev\'e equations,  normalised as in this paper, reduces to a Riccati equation in characteristic $2$, see e.g. Remark \ref{eq:pvichar2} for the case of $\Psix$. Whether they can be normalised  so as not to become reducible modulo $2$ is an interesting problem that we do not address in this paper.
\end{remark}

From Theorem \ref{thm:gen trace} we deduce that the only possible integral of interest arising from $M_p(z)$ is $\mathcal{I}_p^{(J)}$. We find that $\mathcal{I}_p^{(J)}$ is a non-trivial integral of motion for $\Psix$ -- $\Pthree^{D_8}$. Illustrative examples of the integrals of motion for each Painlev\'e equation are given in its corresponding section.

\section{The Sixth Painlev\'e equation}\label{sect:p6}

\subsection{The nonlinear system}
Fix a field $\mathbf{k}$ and consider the field of rational functions $\mathbf{k}(f,g,\alpha_1,\alpha_2,\alpha_3,\alpha_4,t)$ with
the $\mathbf{k}$-linear derivation $D_t$ specified by
 \begin{equation}\label{eq:pvi}
 \Psix:\quad           \begin{cases}
     D_t \hspace{0.5mm}f\hspace{0.5mm}=\,\displaystyle\frac{2f(f-t)(f-1)g-(\alpha_0-1)f(f-1)-\alpha_3 f(f-t)-\alpha_4(f-t)(f-1)}{t(t-1)},  &\\
      D_t \hspace{0.7mm}g\hspace{0.7mm}=\,\displaystyle\frac{g\big[(\alpha_0+\alpha_3-1)f+(\alpha_3+\alpha_4)(f-t)+(\alpha_0+\alpha_4-1)(f-1)+g\,h\big]-\alpha_2(\alpha_1+\alpha_2)}{t(t-1)}, & \\
  D_t \alpha_j \hspace{-0.5mm}=\,0\quad (1\leq j\leq 4), &\\
        D_t \hspace{1.0mm}t\hspace{1.0mm}=\,1, &   
    \end{cases}
    \end{equation}
where $\alpha_0$ and $h$ are defined through the equations
\begin{equation*}
    \alpha_0+\alpha_1+2\alpha_2+\alpha_3+\alpha_4=1,\quad h=1+t(t-1)-f^2-(f-t)^2-(f-1)^2.
\end{equation*} 
The system of ODEs \eqref{eq:pvi} is known as the sixth Painlev\'e equation (in system form). If the characteristic of $\mathbf{k}$ is not $2$, eliminating $g$ yields the standard scalar version of the sixth Painlev\'e equation for $f$,
\begin{equation} \label{eq:PVIscalar}
\begin{aligned}
D_t^2f=&\frac{1}{2}\left(\frac{1}{f}+\frac{1}{f-1}+\frac{1}{f-t}\right)(D_t f)^2-\left(\frac{1}{t}+\frac{1}{t-1}+\frac{1}{f-t}\right)D_t f\\
&+\frac{f(f-1)(f-t)}{t^2(t-1)^2}\left(\alpha+\frac{\beta\, t}{f^2}+\frac{\gamma\,(t-1)}{(f-1)^2}+\frac{\delta \,t(t-1)}{(f-t)^2}\right),
\end{aligned}
\end{equation}
with
\begin{equation*}
    \alpha=\frac{1}{2}\alpha_1^2,\quad \beta=-\frac{1}{2}\alpha_4^2,\quad
    \gamma=\frac{1}{2}\alpha_3^2,\quad \delta=\frac{1}{2}(1-\alpha_0^2).
\end{equation*}
\begin{remark}\label{eq:pvichar2}
    The case of characteristic $2$ is exceptional since the sixth Painlev\'e equation, scaled as in \eqref{eq:pvi},
    degenerates to a Riccatti equation in characteristic $2$. This follows immediately from the equation for $D_tf$ in equation \eqref{eq:pvi}, which reduces to 
\begin{equation*}
   t(t+1) D_t \hspace{0.6mm}f\hspace{0.86mm}=(\alpha_0+1)f(f+1)+\alpha_3 f(f+t)+\alpha_4(f+t)(f+1),
\end{equation*}
modulo $2$. Similarly,  by multiplying equation \eqref{eq:PVIscalar} by $2$, expanding, and reducing the ODE modulo $2$, it reduces to the above Riccatti equation. Whether $\Psix$ can be rescaled such that it does not degenerate modulo $2$ is an interesting question that we will not address here. Analogous statements apply to each of the other Painlev\'e equations, $\textrm{P}_{\textrm{J}}$, $\textrm{J}=\textrm{V},\textrm{IV},\textrm{III}^{D_6},\textrm{III}^{D_7},\textrm{III}^{D_8}$, as well.
\end{remark}

\subsection{The Lax pair}\label{subsect:linP6}
We consider the following difference-differential Lax pair, taken from \cite{ormerodrains2017}, first derived in \cite{adler94} and also derived in \cite{borodin2006},
\begin{subequations}\label{eq:laxpVI}
\begin{align}
    Y(z+1)=A(z)Y(z),\label{eq:laxpviA}\\
    D_t Y(z)=B(z)Y(z),\label{eq:laxpviB}
\end{align}
\end{subequations}
where the coefficient matrix $A(z)$ is a $2\times 2$ matrix polynomial of degree two in $z$,
\begin{equation}\label{eq:pviAseries}
        A(z)=A_0+z A_1+z^2A_2,
    \end{equation}
    characterised by the following three properties.
\begin{enumerate}[label=(\roman*)]
\item The leading order coefficient is given by
\begin{equation*}
    \qquad A_2=\begin{bmatrix}
            1 & 0\\
            0 & t
        \end{bmatrix}.
\end{equation*}
    \item The trace of $A(z)$ has leading order behaviour
   \begin{equation}\label{eq:traceinf}
    \begin{aligned}
     \operatorname{Tr}A(z)=(1+t)z^2-(\alpha_0+\alpha_4+t(\alpha_3+\alpha_4-1))z+\mathcal{O}(1),
    \end{aligned}
   \end{equation}
    as $z\rightarrow \infty$.
    \item The determinant of $A(z)$ equals
\begin{equation}\label{eq:coefmatrixdet}
    |A(z)|=t\,z(z+\alpha_2)(z+\alpha_1+\alpha_2)(z-\alpha_4).
\end{equation}
\end{enumerate}
In other words, the parameters $\{\alpha_1,\alpha_2,\alpha_4\}$ can be thought of as controlling the locations of the zeros of the determinant of $A(z)$, whereas $\{t,\alpha_3\}$ characterise the singularity at $z=\infty$ of equation \eqref{eq:laxpviA}. We use standard coordinates $\{f,g,w\}$ on the coefficient matrix, defined by 
\begin{equation*}
    A_{12}(z)=w(z-f\,g),\qquad A_{11}(z)|_{z=fg}\,=t\,g\,(f\,g-\alpha_4),
\end{equation*}
where $A_{12}(z)$ denotes the $(1,2)$-entry of $A(z)$ and $A_{11}(z)|_{z=fg}$ denotes the $(1,1)$-entry of $A(z)$ evaluated at $z=fg$. These coordinates combined with properties (i)--(iii) result in $A_1$ and $A_0$ in \eqref{eq:pviAseries} to be given by
\begin{align*}
A_1&=\begin{bmatrix}
 -(\alpha _0+\alpha _4) & w \\
\gamma_1 \,w^{-1} & -t(\alpha _3+\alpha _4-1)
\end{bmatrix},\\
A_0&=\begin{bmatrix}
 g [\alpha _0 f+\alpha _4 (f-t)-f g (f-t)] & -f g\, w \\
 \gamma_0 \,w^{-1} & \delta\, f
\end{bmatrix},
\end{align*}
where
\begin{align*}
    \gamma_1&=\alpha _0 (\alpha _3-1) t+\alpha _2 (\alpha _1+\alpha _2) (f-t)+(f-t)g\big[f (f-t)g-(\alpha _0+\alpha_3-1) f-\alpha _4(f-t)\big],\\
    \gamma_0&=\big(\alpha _0 f-(f-t) (f g-\alpha _4)\big)  \cdot\big(\alpha _0 f g+(\alpha _3+\alpha _4-1)  (f-t)g-f  (f-t)g^2-\alpha _2 (\alpha _1+\alpha _2)\big),\\
    \delta&=f  (f-t)g^2-\alpha _0 f g-(\alpha _3+\alpha _4-1)  (f-t)g+\alpha _2 (\alpha _1+\alpha _2).
\end{align*}
In particular, $A(z)$ is a matrix polynomial in $z$ with coefficient entries from $\mathbb{Z}[f,g,\alpha_1,\alpha_2,\alpha_3,\alpha_4,t,w,w^{-1}]$. The corresponding degree one matrix polynomial $B(z)$ in \eqref{eq:laxpviB} is given by
\begin{equation*}
    B(z)=\frac{1}{t(t-1)}\begin{bmatrix}
 0 & w \\
 \gamma_1\,w^{-1} & 0 \\
\end{bmatrix}+\frac{z}{t}\begin{bmatrix}
 0 & 0 \\
 0 & 1\\
\end{bmatrix}.
\end{equation*}
 Furthermore, compatibility of the Lax pair \eqref{eq:laxpVI} implies $\Psix$ and that the auxiliary variable $w$ satisfies the following ODE,
\begin{equation*}
    \frac{D_tw}{w}=\frac{1}{t(t-1)}\big(\alpha_0+\alpha_4-t\,(\alpha_3+\alpha_4-1)+(t-1)f\,g\big).
\end{equation*}

\subsection{The Hamiltonian} Whilst the trace of the product of $A(z)$ evaluated at successive integers,
$$\operatorname{Tr}[A(n-1)\cdot A(n-2)\cdot\ldots\cdot A(1)\cdot A(0)],$$
gives an integral of motion in characteristic $p$ when $n=p$ as in Theorem \ref{thm:gen trace}, the case $n=1$ is related to the Hamiltonian of the sixth Painlev\'e equation. Namely,
\begin{align*}
    H:&=\operatorname{Tr}A_0+(t-1)f\,g\\
    &=\alpha_2(\alpha_1+\alpha_2)f-(\alpha_0-1)f(f-1)g-\alpha_3f(f-t)g-\alpha_4(f-t)(f-1)g+f(f-t)(f-1)g^2,
\end{align*}
is the Hamiltonian of $\Psix$, that is,
\begin{align*}
    t(t-1)\,D_tf&=+\frac{\partial H}{\partial g},\\
    t(t-1)\,D_tg&=-\frac{\partial H}{\partial f},
\end{align*}
are respectively equivalent to the equations for $D_tf$ and $D_tg$ in \eqref{eq:pvi}.

\subsection{An integral in characteristic 2}
The first integral of motion $\mathcal{I}_2^{\textrm{VI}}$, for $\Psix$ in Theorem \ref{thm:gen trace}, is given by
\begin{align*}
      \mathcal{I}_2^{\textrm{VI}}=\,&
+f^6 g^4 + (1 + t^2) f^4 g^4 + \alpha_1^2 f^4 g^2 + 
 t^2 f^2 g^4 + (1 + \alpha_1 + \alpha_3 + t(\alpha_3 + \alpha_4) ) f^3 g^2 \\
 &+ (1 + \alpha_1 + \alpha_1^2 + \alpha_3 + 
    \alpha_3^2 + t(\alpha_1 + \alpha_4) + t^2(1 + \alpha_3 + \alpha_3^2 + \alpha_4 + \alpha_4^2) ) f^2 g^2 \\
    &+ 
 \alpha_1 (1 + \alpha_1 + \alpha_3 + t(\alpha_3 + \alpha_4) ) f^2 g 
 + t(\alpha_1 + \alpha_3 +t (1 + \alpha_3 + \alpha_4) )  f g^2 + 
  (\alpha_1 + \alpha_2)^2\alpha_2^2 f^2\\
 &+ (\alpha_1 + \alpha_1^2 + \alpha_3 + 
    \alpha_3^2 + t^2(\alpha_3 + \alpha_3^2 + \alpha_4 + \alpha_4^2) ) f g + 
 \alpha_4^2 t^2 g^2+ (\alpha_1 + \alpha_2) \alpha_2 (1 + \alpha_1 + \alpha_3 + t(\alpha_3 + \alpha_4)) f \\
 &+ 
 \alpha_4 t (\alpha_1 + \alpha_3 + t(1 + \alpha_3 + \alpha_4) ) g + (\alpha_1 + \alpha_2) \alpha_2 \alpha_4 t.
\end{align*}
We may drop all terms polynomial in $f^p,g^p,t^p$ and parameters from an integral of motion in characteristic $p$, since such terms trivially lie in the kernel of $D_t$. This yields e.g. the following simplified integral of motion for $\Psix$ in characteristic $2$,
\begin{align*}
    \widetilde{\mathcal{I}}_2^{\textrm{VI}}=\,&
 + (1 + \alpha_1 + \alpha_3 + t(\alpha_3 + \alpha_4) ) f^3 g^2 + t(\alpha_1 + \alpha_4) f^2 g^2 + 
 \alpha_1 (1 + \alpha_1 + \alpha_3 + t(\alpha_3 + \alpha_4) ) f^2 g \\
 &
 + t(\alpha_1 + \alpha_3 +t (1 + \alpha_3 + \alpha_4) )  f g^2 + (\alpha_1 + \alpha_1^2 + \alpha_3 + 
    \alpha_3^2 + t^2(\alpha_3 + \alpha_3^2 + \alpha_4 + \alpha_4^2) ) f g \\
    &+ (\alpha_1 + \alpha_2) \alpha_2 (1 + \alpha_1 + \alpha_3 + t(\alpha_3 + \alpha_4)) f + 
 \alpha_4 t (\alpha_1 + \alpha_3 + t(1 + \alpha_3 + \alpha_4) ) g + (\alpha_1 + \alpha_2) \alpha_2 \alpha_4 t.
\end{align*}
We note, however, that Remark \ref{rem:invariance_translations} no longer applies to such simplified integrals of motion. That is, they are no longer necessarily invariant under translational elements of the corresponding symmetry group.

In Appendix \ref{sec:algebraic}, the integral $\mathcal{I}_7^{\textrm{VI}}$ is given for a particular choice of the parameters $\alpha_i$, $1\leq i\leq 4$, corresponding to an algebraic solution. It is shown in that section that the integral is indeed constant upon substitution of the relevant algebraic solution.

\section{The Fifth Painlev\'e equation (KNY)}\label{sect:p5}

\subsection{The nonlinear system}
Fix a field $\mathbf{k}$ and consider the field of rational functions $\mathbf{k}(f,g,\alpha_1,\alpha_2,\alpha_3,t)$ with the $\mathbf{k}$-linear derivation $D_t$ specified by
 \begin{equation}\label{eq:pv}
 \Pfive^\text{KNY}:\quad           \begin{cases}
       t\, D_t f=-t\,f+2(f-1)fg-(\alpha_1+\alpha_3)f+tf^2+\alpha_1, &   D_t \alpha_1=0, \\
  t\, D_t \hspace{0.2mm}g\hspace{0.1mm}=+t\,\hspace{0.2mm}g\hspace{0.1mm}-2(\hspace{0.2mm}g\hspace{0.1mm}+t)fg+(\alpha_1+\alpha_3)\hspace{0.2mm}g+g^2-\alpha_2 t, &   D_t \alpha_2=0, \\
      \hspace{1.9mm}    D_t \hspace{0.4mm}t\hspace{0.4mm}=1, &   D_t \alpha_3=0. \\ 
    \end{cases}
    \end{equation}
This system of ODEs is known as the fifth Painlev\'e equation (in system form). Assuming the characteristic of $\mathbf{k}$ is not $2$, eliminating $g$ from equations \eqref{eq:pv}, gives the standard scalar form of the fifth Painlev\'e equation for the transformed dependent variable $y=1-f^{-1}$,
\begin{equation}\label{eq:pvscal}
    D_t^2y=\left(\frac{1}{2y}+\frac{1}{y-1}\right)(D_ty)^2-\frac{1}{t}D_ty+\frac{(y-1)^2}{t^2}\left(\alpha y+\frac{\beta}{y}\right)+\gamma \frac{y}{t}+\delta \frac{y(y+1)}{y-1},
\end{equation}
with
\begin{equation}\label{eq:pvscal_par}
    \alpha=\frac{1}{2}\alpha_1^2,\quad \beta=-\frac{1}{2}\alpha_3^2,\quad
    \gamma=1-(\alpha_1+2\alpha_2+\alpha_3),\quad \delta=-\frac{1}{2}.
\end{equation}

\subsection{The Lax pair} \label{subsect:linP5}
We consider the following difference-differential Lax pair, obtained by rewriting the one given by Kajiwara et al. \cite{KNY2017} for $\Pfive$ in system form,
\begin{subequations}\label{eq:laxpv}
\begin{align}
    Y(z+1)=A(z)Y(z),\label{eq:laxpvA}\\
    D_t Y(z)=B(z)Y(z),\label{eq:laxpvB}
\end{align}
\end{subequations}
where the coefficient matrix $A(z)$ is a $2\times 2$ matrix polynomial of degree two in $z$,
\begin{equation}\label{eq:pvAseries}
        A(z)=A_0+z A_1+z^2A_2,
    \end{equation}
    characterised by the following three properties.
\begin{enumerate}[label=(\roman*)]
\item The leading order coefficient is given by
\begin{equation*}
    \qquad A_2=\begin{bmatrix}
            1 & 0\\
            0 & 0
        \end{bmatrix}.
\end{equation*}
    \item The trace of $A(z)$ has leading order behaviour
   \begin{equation}\label{eq:traceinfpvKNY}
    \begin{aligned}
     \operatorname{Tr}A(z)=z^2-(t+\alpha_1+\alpha_3)z+\mathcal{O}(1),
    \end{aligned}
   \end{equation}
    as $z\rightarrow \infty$.
    \item The determinant of $A(z)$ equals
\begin{equation}\label{eq:coefmatrixdetpvKNY}
    |A(z)|=-t\,z(z-\alpha_1)(z+\alpha_2).
\end{equation}
\end{enumerate}
In other words, the parameters $\{\alpha_1,\alpha_2\}$ can be thought of as controlling the locations of the zeros of the determinant of $A(z)$, whereas $\{t,\alpha_3\}$ characterises the singularity at $z=\infty$ of equation \eqref{eq:laxpvA}. We use standard coordinates $\{f,g,w\}$ on the coefficient matrix, defined by 
\begin{equation*}
    A_{12}(z)=w(z-f\,g),\qquad A_{11}(z)|_{z=fg}\,=g\,(f\,g-\alpha_1),
\end{equation*}
where $A_{12}(z)$ denotes the $(1,2)$-entry of $A(z)$ and $A_{11}(z)|_{z=fg}$ denotes the $(1,1)$-entry of $A(z)$ evaluated at $z=fg$. 
These coordinates combined with properties (i)--(iii) result in $A_1$ and $A_0$ in \eqref{eq:pvAseries} to be given by
\begin{align*}
A_1&=\begin{bmatrix}
 -(\alpha_1+\alpha_3) & w\\
 [\alpha_3-(f-1)(f\,g+\alpha_2)]\,t\,w^{-1} & -t
\end{bmatrix},\\
A_0&=\begin{bmatrix}
g\, h & -f\, g\, w\\
[\alpha_2+(f-1)g]\,h\,t\,w^{-1} & -t\,f[\alpha_2+(f-1)g]
\end{bmatrix},
\end{align*}
where
\begin{equation*}
    h=\alpha_3\, f+(f-1)(\alpha_1-f\,g).
\end{equation*}
In particular, $A(z)$ is a matrix polynomial in $z$ with coefficient entries from $\mathbb{Z}[f,g,\alpha_1,\alpha_2,\alpha_3,t,w,w^{-1}]$. The corresponding degree one matrix polynomial $B(z)$ in \eqref{eq:laxpvB} is given by,
\begin{equation*}
    B(z)=-\frac{1}{t}\begin{bmatrix}
 0 & w \\
 [\alpha_3-(f-1)(\alpha_2+f g)]\,t\,w^{-1} & 0 \\
\end{bmatrix}+\frac{z}{t}\begin{bmatrix}
 0 & 0 \\
 0 & 1\\
\end{bmatrix}.
\end{equation*}
Furthermore, compatibility of the Lax pair \eqref{eq:laxpv} is equivalent to $\Pfive$ and the following ODE for the auxiliary variable $w$,
\begin{equation*}
    \frac{D_tw}{w}=1-\frac{\alpha_1+\alpha_3-f\,g}{t}.
\end{equation*}

\subsection{The Hamiltonian} Whilst the trace of the product of $A(z)$ evaluated at successive integers,
$$\operatorname{Tr}[A(n-1)\cdot A(n-2)\cdot\ldots\cdot A(1)\cdot A(0)],$$
gives an integral of motion in characteristic $p$ when $n=p$ as in Theorem \ref{thm:gen trace}, the case $n=1$ is related to the Hamiltonian of $\Pfive^\text{KNY}$. Namely,
\begin{equation*}
    H:=-\operatorname{Tr}A_0=f(f-1)g(g+t)-\alpha_1(f-1)g+\alpha_2 t f-\alpha_3 f g,
\end{equation*}
is the Hamiltonian of $\Pfive^\text{KNY}$, that is,
\begin{align*}
    t\,D_tf&=+\frac{\partial H}{\partial g},\\
    t\,D_tg&=-\frac{\partial H}{\partial f},
\end{align*}
are respectively equivalent to the equations for $D_tf$ and $D_tg$ in \eqref{eq:pv}.

\subsection{Integrals in finite characteristic}
The first few integrals of motion $\mathcal{I}_p^{\textrm{V}(\text{NKY})}$, $p=2,3$, for $\Pfive^\text{KNY}$ in Theorem \ref{thm:gen trace}, are given by
\begin{align*}
    \mathcal{I}_2^{\textrm{V}(\text{KNY})}=\,&+f^4 g^4+(t f+g)^2f^2 g^2 + (\alpha_1^2+\alpha_1+\alpha_3^2+\alpha_3+t^2+t+1)f^2 g^2+f^2g+(\alpha_1+\alpha_3+t)(tf+g)fg\\
    &+(\alpha_1^2+\alpha_1+\alpha_3^2+\alpha_3+t^2)fg+\alpha_2^2 t^2 f^2+\alpha_1^2 g^2+(\alpha_1+\alpha_3+t)(\alpha_2 t f+\alpha_1 g)+\alpha_1 g+\alpha_1 \alpha_2 t,\\
   \mathcal{I}_3^{\textrm{V}(\text{KNY})}=\, &-f^6g^6- (t f-g)^3f^3 g^3-f^4 g^4+f^3 g^3 (t f-g)-(t^2 f^2+g^2)f^2 g^2-(\alpha_1-\alpha_2+\alpha_3+t-1)t f^3 g^2\\
   &+(\alpha_1^3-\alpha_1+\alpha_3^3-\alpha_3+t^3+t)f^3 g^3+(\alpha_3-\alpha_1+t) f^2 g^3-(\alpha_1 g^2+f g-\alpha_2 t^2 f^2)f g\\
   &+(\alpha_1(\alpha_1-\alpha_2-\alpha_3+t+1)+\alpha_2+\alpha_3(\alpha_3-\alpha_2-t+1)+t^2)t f^2 g- \alpha_2^3 t^3 f^3- \alpha_1^3 g^3\\
   & +(\alpha_1^2-\alpha_3^2
   +(\alpha_1-\alpha_2+\alpha_3-t-1)t+1)f g^2-\alpha_2^2 t^2 f^2-\alpha_1^2 g^2\\
   &-(\alpha_1^3-\alpha_1+\alpha_3^3-\alpha_3-(\alpha_1^2+\alpha_1 \alpha_3+ \alpha_1-\alpha_2 \alpha_3-\alpha_2)t+ (\alpha_2-\alpha_1)t^2+t^3)f g \\
   &+(\alpha_1^2- \alpha_1 \alpha_3+\alpha_3^2+\alpha_1+\alpha_3+ (\alpha_1+\alpha_2-\alpha_3)t+t^2)\alpha_2tf\\
   &+(\alpha_1^2-\alpha_1 \alpha_3+\alpha_3^2-1+(\alpha_1+\alpha_2-\alpha_3+1)t+t^2)\alpha_1g-\alpha_1 \alpha_2(\alpha_1+\alpha_3+t+1)t.
\end{align*}

\section{The Fifth Painlev\'e equation (Nakazono)}\label{sect:p5Na}

\subsection{The nonlinear system}
Fix a field $\mathbf{k}$ and consider the field of rational functions $\mathbf{k}(f,g,\alpha_1,\alpha_2,\alpha_3,t)$ with the $\mathbf{k}$-linear derivation $D_t$ specified by
 \begin{equation}\label{eq:pvNa}
 \Pfive^\text{Na}:\quad           \begin{cases}
       t\, D_t f=-t\,f+(f-1)(\alpha_3+(2\alpha_0+\alpha_1)f+2f(f-1)g, &   D_t \alpha_1=0, \\
  t\, D_t \hspace{0.2mm}g\hspace{0.1mm}=+t\,\hspace{0.2mm}g\hspace{0.1mm}-\alpha_0(\alpha_0+\alpha_1)t-g[\alpha_3+(2\alpha_0+\alpha_1)(2f-1)+(f-1)(3f-1)g], &   D_t \alpha_2=0, \\
     \hspace{1.9mm}   D_t \hspace{0.4mm}t\hspace{0.4mm}=1, &   D_t \alpha_3=0, \\ 
    \end{cases}
    \end{equation}
with $\alpha_0$ defined by $\alpha_0+\alpha_1+\alpha_2+\alpha_3=1$.    
This system of ODEs is another system form of the fifth Painlev\'e equation. Assuming the characteristic of $\mathbf{k}$ is not $2$, eliminating $g$ from equations \eqref{eq:pvNa}, gives the standard scalar form \eqref{eq:pvscal} of the fifth Painlev\'e equation  for $y=f$ with parameters \eqref{eq:pvscal_par}.

Temporarily writing $\{f^\text{KNY},g^\text{KNY}\}$ for the depend variables in $\Pfive^\text{KNY}$, the differential fields $\Pfive^\text{Na}$ and $\Pfive^\text{KNY}$ are isomorphic under the change of dependent variables
\begin{equation}\label{eq:NatoKNY}
f=1-1/f^\text{KNY},\quad g=f^\text{KNY}(\alpha_0+f^\text{KNY}(t+g^\text{KNY})).
\end{equation}

\subsection{The Lax pair} \label{subsect:linP5Na}
We consider the following difference-differential Lax pair, obtained by renormalising and gauge transforming the one derived by Nakazono
\cite{nakazono_lax} for $\Pfive$,
\begin{subequations}\label{eq:laxpvNa}
\begin{align}
    Y(z+1)=A(z)Y(z),\label{eq:laxpvNaA}\\
    D_t Y(z)=B(z)Y(z),\label{eq:laxpvNaB}
\end{align}
\end{subequations}
where the coefficient matrix $A(z)$ is a $2\times 2$ matrix polynomial of degree two in $z$,
\begin{equation*}
        A(z)=A_0+z A_1+z^2A_2,
    \end{equation*}
    characterised by the following four properties.
\begin{enumerate}[label=(\roman*)]
\item The leading order coefficient is given by
\begin{equation*}
    \qquad A_2=\begin{bmatrix}
            1 & 0\\
            t & 1
        \end{bmatrix}.
\end{equation*}
    \item The trace of $A(z)$ has leading order behaviour
   \begin{equation}\label{eq:traceinfpvNAK}
    \begin{aligned}
     \operatorname{Tr}A(z)=2\,z^2+(t+2-\alpha_1-2\alpha_2-3\alpha_3)z+\mathcal{O}(1),
    \end{aligned}
   \end{equation}
    as $z\rightarrow \infty$.
    \item The determinant of $A(z)$ equals
\begin{equation}\label{eq:coefmatrixdetpvNAK}
    |A(z)|=z(z-\alpha_3)(z+1-\alpha_2-\alpha_3)(z+1-\alpha_1-\alpha_2-\alpha_3),
\end{equation}
\item The top-left entry of $A_1$ is zero, $(A_1)_{11}=0$. 
\end{enumerate}
In other words, the parameters $\{\alpha_1,\alpha_2,\alpha_3\}$ can be thought of as controlling the locations of the zeros of the determinant of $A(z)$, whereas $t$ characterises the singularity at $z=\infty$ of equation \eqref{eq:laxpvNaA}.
 We note that, given (i) and (iii), property (ii) is equivalent to the top-right entry of $A_1$ being equal to one, $(A_1)_{12}=1$.

We introduce standard coordinates $\{f,g\}$ on the coefficient matrix, defined by 
\begin{equation*}
    A_{12}(z)=(z-f\,g),\qquad A_{11}(z)|_{z=fg}\,=g(f\,g-\alpha_3),
\end{equation*}
where $A_{12}(z)$ denotes the $(1,2)$-entry of $A(z)$ and $A_{11}(z)|_{z=fg}$ denotes the $(1,1)$-entry of $A(z)$ evaluated at $z=fg$. 
These coordinates combined with properties (i)--(iv) result in $A_1$ and $A_0$ in \eqref{eq:piiiAseries} to be given by
\begin{align*}
A_1&=\begin{bmatrix}
    0 & 1\\
    \gamma_1 & t+2\alpha_0+\alpha_1-\alpha_3
\end{bmatrix},\\
A_0&=\begin{bmatrix}
    -g(\alpha_3+f^2g-fg) & -f\,g\\
    \gamma_0(\alpha_3+f^2g-fg) & f\,\gamma_0
\end{bmatrix},
\end{align*}
with
\begin{align*}
    \gamma_0&=\alpha_0(\alpha_0+\alpha_1)-(t+2\alpha_0+\alpha_1-\alpha_3)g+fg(2\alpha_0+\alpha_1+fg-g),\\
    \gamma_1&=\alpha_0(\alpha_0+\alpha_1)(f-1)+(\alpha_3+f^2g-fg)(2\alpha_0+\alpha_1+fg-g).
\end{align*}
In particular, $A(z)$ is a matrix polynomial in $z$ with coefficient entries from $\mathbb{Z}[f,g,\alpha_1,\alpha_2,\alpha_3,t]$.
The corresponding degree one matrix polynomial $B(z)$ in \eqref{eq:laxpvNaB} is given by
\begin{equation*}
    B(z)=\frac{1}{t}\begin{bmatrix}
 -\frac{1}{2}(t+2\alpha_0+\alpha_1-\alpha_3) & 1 \\
 \gamma_1+tfg & \frac{1}{2}(t+2\alpha_0+\alpha_1-\alpha_3) \\
\end{bmatrix}+z\begin{bmatrix}
 0 & 0 \\
 1 & 0\\
\end{bmatrix}.
\end{equation*}
Compatibility of the Lax pair \eqref{eq:laxpvNa} is equivalent to $\Pfive^\text{Na}$.

\begin{remark}\label{rem:nogaugew}
Unlike the Lax pairs for $\Psix$ and $\Pfive^\text{KNY}$, the Lax pair for $\Pfive^\text{Na}$ above has no auxiliary variable $w$. This is because the constant coefficient of the top-left entry of $A(z)$, $(A_1)_{11}$, which plays the role of the auxiliary variable in \eqref{eq:laxpvNa}, can always be normalised to equal $0$ through the $\mathbb{A}_\mathbb{C}^1$-action on $A(z)$ induced by
    \begin{equation}\label{eq:pivgaugefreedom}
     Y(z)\mapsto \widetilde{Y}(z)=R\,Y(z),\quad    A(z)\mapsto \widetilde{A}(z)=RA(z)R^{-1},\qquad R=\begin{bmatrix}
            1 &0\\
            r & 1
        \end{bmatrix},\quad r\in \mathbb{A}_\mathbb{C}^1.
    \end{equation}
Enforcing condition (iv) above, this gauge freedom is completely rigidified. In comparison, the auxiliary $w$ is a necessary feature of e.g. the Lax pair of $\Psix$, as it allows the parametrisation to also capture special cases such as $A_{12}(z)\equiv c$ being a constant, by letting $w\rightarrow 0$ with $fg\sim c w^{-1}$ in \eqref{eq:laxpviA}. Such a limit corresponds to one of the movable singularities of $\Psix$, see e.g. the proof of \cite[Lemma 2.5]{joshiroffelsencrystal}, where the correspondence between exceptional lines and special conditions on the coefficient matrix is made completely explicit for the case of $q\Psix$.
\end{remark}

\subsection{The Hamiltonian}Whilst the trace of the product of $A(z)$ evaluated at successive integers,
$$\operatorname{Tr}[A(n-1)\cdot A(n-2)\cdot\ldots\cdot A(1)\cdot A(0)],$$
gives an integral of motion in characteristic $p$ when $n=p$ as in Theorem \ref{thm:gen trace}, the case $n=1$ is related to the Hamiltonian of $\Pfive^\text{KNY}$. Namely,
\begin{equation*}
    H:=\operatorname{Tr}A_0=f(1+fg-g)^2-(\alpha_1+2\alpha_2+2\alpha_3)f(1+fg-g)+(1-\alpha_0)(\alpha_2+\alpha_3)f+\alpha_3(f-1)g-tfg,
\end{equation*}
is the Hamiltonian of $\Pfive^\text{Na}$, that is,
\begin{align*}
    t\,D_tf&=+\frac{\partial H}{\partial g},\\
    t\,D_tg&=-\frac{\partial H}{\partial f},
\end{align*}
are respectively equivalent to the equations for $D_tf$ and $D_tg$ in \eqref{eq:pvNa}.

\subsection{Integrals in finite characteristic}
The first few integrals of motion $\mathcal{I}_p^{\textrm{V}(\text{Na})}$, $p=2,3$, for $\Pfive^\text{Na}$ in Theorem \ref{thm:gen trace}, are given by
\begin{align*}
    \mathcal{I}_2^{\textrm{V}(\text{Na})}=\,&+f^6 g^4+f^2 g^2 (\alpha_1 f+g)^2+ (\alpha_0+\alpha_2+t)fg(f^2g+\alpha_1 f+g)+(\alpha_1+\alpha_3+t)^2f^2 g^2+\alpha_3^2 g^2\\
    &+ \alpha_0^2(\alpha_0+\alpha_1)^2 f^2+ \left(\alpha_1^2+\alpha_1+\alpha_3^2+\alpha_3+t^2\right)f g+(\alpha_0+\alpha_2+t) \left(\alpha_0(\alpha_0+\alpha_1) f+\alpha_3 g\right) \\
    &+\alpha_0\alpha_3(\alpha_0+\alpha_1),\\
     \mathcal{I}_3^{\textrm{V}(\text{Na})}=\,&+f^9 g^6+f^6 g^6+f^6 g^4+ \left(\alpha_2^3+\alpha_3^3- \alpha_1^3-1\right)f^6 g^3+\left(g^2-f^2\right)f^3 g^4 + (\alpha_0-\alpha_1)f^5 g^3\\
     &+ (g-\alpha_3f-tf+\alpha_1^3-\alpha_2^3-t^3+t+1)f^3 g^3+\alpha_1^2 f^4 g^2+f^2 g^4-  (\alpha_0- \alpha_1- t)f^2 g^3\\
     &+\left((\alpha_2-\alpha_0-t)t-\alpha_3^2+1\right)f^3 g^2-\left(\alpha_1^2(\alpha_0+\alpha_1)+ \alpha_2^3+\alpha_3^3-1\right)f^3 g+\alpha_3 f g^3+f^2 g^2\\
     &+\left(
     \alpha_1^2(\alpha_3-\alpha_0)+\alpha_2^3+\alpha_3^2(\alpha_1-\alpha_2+1)
     - \alpha_1+\alpha_2+\alpha_3+1\right)f^2g\\  
     &+\left(\left(\alpha_1^2- \alpha_1 \alpha_2- \alpha_2^2+\alpha_3^2+\alpha_3+1\right)t+ (\alpha_0-\alpha_1)t^2\right)f^2 g+\left((\alpha_0-\alpha_2-t)t- \alpha_1^2+1\right)f g^2 \\
     &-\alpha_0^3(\alpha_2+\alpha_3-1)^3 f^3- \alpha_3^3 g^3+\alpha_0^2(\alpha_0+\alpha_1)^2 f^2 +\alpha_3^2 g^2\\
     &+\left(\alpha_1^2(\alpha_1+\alpha_3-\alpha_0)+\alpha_3^2(\alpha_0-\alpha_1-\alpha_3)+\alpha_1-\alpha_2+1
    + (\alpha_0\alpha_2+\alpha_1\alpha_3-\alpha_0) t + t^3\right)f g\\
     &-\alpha_0(\alpha_2+\alpha_3-1) \left(\alpha_1^2+\alpha_1 \alpha_3+\alpha_1 t- \alpha_2 \alpha_3- \alpha_2 t+\alpha_3+t^2-1\right)f\\
     &+\left(\alpha_1^2+ \alpha_1 \alpha_3+\alpha_1 t-\alpha_2 \alpha_3-\alpha_2 t+ \alpha_3+ t^2-t-1\right)\alpha_3g-\alpha_0\alpha_3 (\alpha_2+\alpha_3-1) (\alpha_1-\alpha_2-t+1).
\end{align*}

under the identification \eqref{eq:NatoKNY} between the dependent variables of $\Pfive^\text{KNY}$ and $\Pfive^\text{Na}$, we find that the corresponding integrals of motion $\mathcal{I}_p^{\textrm{V}(\text{KNY})}$ and $\mathcal{I}_p^{\textrm{V}(\text{Na})}$ are related by
\begin{equation*}
   \mathcal{I}_p^{\textrm{V}(\text{KNY})} +\mathcal{I}_p^{\textrm{V}(\text{Na})}=\mathcal{A}_3(\mathcal{A}_1+\mathcal{A}_2+\mathcal{A}_3)-(\mathcal{A}_2+\mathcal{A}_3)t^p,
\end{equation*}
for $p=2,3,5,7,11$, where we used the notation in \eqref{eq:notationps_reform}.
We expect this relation to hold for all primes $p$.

\section{The Fourth Painlev\'e equation}\label{sect:p4}
\subsection{The nonlinear system}
Fix a field $\mathbf{k}$ and consider the field of rational functions $\mathbf{k}(f,g,\alpha_1,\alpha_2,t)$ with the $\mathbf{k}$-linear derivation $D_t$ specified by
 \begin{equation}\label{eq:piv}
 \Pfour:\quad           \begin{cases}
        D_t f=f(2\hspace{0.3mm}g-f-t)-\alpha_1, &   D_t \alpha_1=0, \\
   D_t \hspace{0.3mm}g=g\hspace{0.3mm}(2f-\hspace{0.3mm}g+t)+\alpha_2, &   D_t \alpha_2=0, \\
        D_t \hspace{0.4mm}t\hspace{0.4mm}=1. & 
    \end{cases}
    \end{equation}
 This system of ODEs is known as the fourth Painlev\'e equation (in system form). Assuming the characteristic of $\mathbf{k}$ is not $2$, eliminating $g$ from equations \eqref{eq:piv}, gives the following scalar form of the fourth Painlev\'e equation for $f$,
\begin{align*}
    D_t^2f=&\frac{(D_tf)^2}{2f}+\frac{3}{2}f^3+2tf^2+(\tfrac{1}{2}t^2-\alpha)f+\frac{\beta}{4f},
\end{align*}
with
\begin{equation*}
    \alpha=1-\alpha_1-2\alpha_2,\quad \beta=-2\alpha_1^2.
\end{equation*}

\subsection{The Lax pair}\label{subsect:linP4}
We consider the following difference-differential Lax pair, obtained by normalising and gauge transforming the Lax pair for $\Pfour$ in \cite{nakazono_lax} and first derived in \cite{adler94},
\begin{subequations}\label{eq:laxpiv}
\begin{align}
    Y(z+1)=A(z)Y(z),\label{eq:laxpivA}\\
    D_t Y(z)=B(z)Y(z),\label{eq:laxpivB}
\end{align}
\end{subequations}
where the coefficient matrix $A(z)$ is a $2\times 2$ matrix polynomial of degree two in $z$,
\begin{equation}\label{eq:pivAseries}
        A(z)=A_0+z A_1+z^2A_2,
    \end{equation}
    characterised by the following four properties.
\begin{enumerate}[label=(\roman*)]
\item The leading order coefficient is given by
\begin{equation*}
    \qquad A_2=\begin{bmatrix}
            0 & 0\\
            1 & 0
        \end{bmatrix}.
\end{equation*}
    \item The trace of $A(z)$ has leading order behaviour
   \begin{equation}\label{eq:traceinfpiv}
    \begin{aligned}
     \operatorname{Tr}A(z)=t\, z+\mathcal{O}(1),
    \end{aligned}
   \end{equation}
    as $z\rightarrow \infty$.
    \item The determinant of $A(z)$ equals
\begin{equation}\label{eq:coefmatrixdetpiv}
    |A(z)|=-z(z-\alpha_1)(z+\alpha_2).
\end{equation}
\item The top-left entry of $A_1$ is zero, $(A_1)_{11}=0$. 
\end{enumerate}
In other words, the parameters $\{\alpha_1,\alpha_2\}$ can be thought of as controlling the locations of the zeros of the determinant of $A(z)$, whereas $t$ characterises the singularity at $z=\infty$ of equation \eqref{eq:laxpivA}.

We use standard coordinates $\{f,g\}$ on the coefficient matrix, defined by 
\begin{equation*}
    A_{12}(z)=z-f\,g,\qquad A_{11}(z)|_{z=fg}\,=g\,(f\,g-\alpha_1),
\end{equation*}
where $A_{12}(z)$ denotes the $(1,2)$-entry of $A(z)$ and $A_{11}(z)|_{z=fg}$ denotes the $(1,1)$-entry of $A(z)$ evaluated at $z=fg$. 
These coordinates combined with properties (i)--(iii) result in $A_1$ and $A_0$ in \eqref{eq:pivAseries} to be given by
\begin{align*}
A_1&=\begin{bmatrix}
 0 & 1\\
f\,g-\alpha_1+\alpha_2 & t
\end{bmatrix},\\
A_0&=\begin{bmatrix}
g(f\,g-\alpha_1) & -f\,g\\
(f\,g-\alpha_1)(\alpha_2+(f+t)g) & -f(\alpha_2+(f+t)g)
\end{bmatrix}.
\end{align*}
In particular, $A(z)$ is a matrix polynomial in $z$ with coefficient entries from $\mathbb{Z}[f,g,\alpha_1,\alpha_2,t]$.
The corresponding degree one matrix polynomial $B(z)$ in \eqref{eq:laxpivB} is given by,
\begin{equation*}
    B(z)=\begin{bmatrix}
 -\frac{1}{2}t & 1\\
 2\,f\,g-\alpha_1+\alpha_2 & \frac{1}{2}t
\end{bmatrix}+z\begin{bmatrix}
 0 & 0 \\
 1 & 0\\
\end{bmatrix}.
\end{equation*}
Compatibility of the Lax pair \eqref{eq:laxpiv} is equivalent to $\Pfour$.

This Lax pair has no auxiliary variable $w$ for the same reason that $\Pfive^\text{Na}$ does not have one, see Remark \ref{rem:nogaugew}.

\subsection{The Hamiltonian}Whilst the trace of the product of $A(z)$ evaluated at successive integers,
$$\operatorname{Tr}[A(n-1)\cdot A(n-2)\cdot\ldots\cdot A(1)\cdot A(0)],$$
gives an integral of motion in characteristic $p$ when $n=p$ as in Theorem \ref{thm:gen trace}, the case $n=1$ is related to the Hamiltonian of the fourth Painlev\'e equation. Namely,
\begin{equation*}
    H:=\operatorname{Tr}A_0=-f\,g(t+f-g)-\alpha_1g-\alpha_2f,
\end{equation*}
is the Hamiltonian of $\Pfour$, that is,
\begin{align*}
    D_tf&=+\frac{\partial H}{\partial g},\\
    D_tg&=-\frac{\partial H}{\partial f},
\end{align*}
are respectively equivalent to the equations for $D_tf$ and $D_tg$ in \eqref{eq:piv}.

\subsection{Integrals in finite characteristic}

The first few integrals of motion $\mathcal{I}_p^{\textrm{IV}}$, $p=2,3,5$, for $\Pfour$ in Theorem \ref{thm:gen trace}, are given by
\begin{align*}
    \mathcal{I}_2^{\textrm{IV}}=\,& 
   + f^2 g^2(t^2+f^2+g^2)+t (t+f+g)f g+\alpha_2^2 f^2+\alpha_1^2 g^2+\alpha_2 t f+\alpha_1 t g+f g+\alpha_1 \alpha_2, 
    \\
    \mathcal{I}_3^{\textrm{IV}}=\,&-f^3 g^3 (t^3+f^3-g^3)
 +t^2 (t+f-g)  f g 
  -(\alpha _1-\alpha _2)  (f-g)f g+\alpha _1 t   (t-f)g+\alpha _2 t f  (g+t)- (f+g)f g\\
  &-\alpha _2^3 f^3-\alpha _1^3 g^3
  +\alpha _2(\alpha _2-1)  f-\alpha _1(\alpha _1-1)  g+\alpha _1\alpha _2  (g-f+t),\\
      \mathcal{I}_5^{\textrm{IV}}=\,&-f^5 g^5 (t^5+f^5-g^5)+t f g (t+f-g) (t^3+2 f g(t+f-g))
      -\alpha_1^5 g^5-\alpha_2^5 f^5-2 f^3 g^2-2 f^2 g^3\\
      &+t^4 (\alpha_2 f+\alpha_1 g)+2 t^3(\alpha_1-\alpha_2) f g +t^2 (2 \alpha_1+2 \alpha_2-1)(f+g) f g -t f g (f-g) (\alpha_2 f+\alpha_1 g)\\
      &+t^2 \left(2 \alpha_1^2 g-2 \alpha_2^2 f+2 \alpha_1 \alpha_2 f-2 \alpha_1 \alpha_2 g+\alpha_1 g-\alpha_2 f\right)+t \left(2 \alpha_2^2 f^2+2 \alpha_1^2 g^2+(\alpha_1^2+\alpha_2^2) f g\right)\\
      &+f g \left(\alpha_1^2+\alpha_1 \alpha_2-\alpha_1+\alpha_2^2-\alpha_2-2\right) (f-g)+\alpha_2 f \left(\alpha_1^2+\alpha_2^2+\alpha_1 \alpha_2-\alpha_1+\alpha_2-2\right)\\
      &+\alpha_1 g \left(\alpha_1^2+\alpha_2^2+\alpha_1 \alpha_2+\alpha_1-\alpha_2-2\right)+\alpha_1 \alpha_2 t^3-2 t \alpha_1 \alpha_2 (\alpha_1-\alpha_2).
\end{align*}

\section{The Third Painlev\'e equation of $D_6$ type} \label{sect:p3d6}
\subsection{The nonlinear system}
Fix a field $\mathbf{k}$ and consider the field of rational functions $\mathbf{k}(f,g,\alpha_1,\alpha_2,t)$ with the $\mathbf{k}$-linear derivation $D_t$ specified by
 \begin{equation}\label{eq:piii}
 \Pthree^{D_6}:\quad           \begin{cases}
       t\, D_t f=\hspace{0.4mm}t\hspace{0.4mm}+(\alpha_1+\alpha_2)f+2f^2g-f^2, &   D_t \alpha_1=0, \\
  t\, D_t \hspace{0.2mm}g\hspace{0.1mm}=\hspace{-0.2mm} \alpha_2\hspace{-0.2mm}-(\alpha_1+\alpha_2)g-2fg^2+2 fg, &   D_t \alpha_2=0, \\
        D_t \hspace{0.4mm}t\hspace{0.4mm}=1. &    \\ 
    \end{cases}
    \end{equation}
This system of ODEs is known as the third Painlev\'e equation of $D_6$ type (in system form).

Assuming the characteristic of $\mathbf{k}$ is not $2$, eliminating $g$ from equations \eqref{eq:piii}, gives the following scalar form of the third Painlev\'e equation for $f$,
\begin{equation*}
    D_t^2f=\frac{(D_tf)^2}{f}-\frac{1}{t}D_tf+\frac{1}{t^2}f^3+\frac{\alpha_2-\alpha_1}{t^2}f^2+\frac{1-\alpha_1-\alpha_2}{t}-\frac{1}{f}.
\end{equation*}

\subsection{The Lax pair} \label{subsect:linP3d6}
We consider the following difference-differential Lax pair, which we obtained by matching up an ansatz system form with the scalar Lax pair in \cite{KNY2017} for $\Pthree^{D_6}$, though we note a system form is also given in \cite[\S 7]{adler94},
\begin{subequations}\label{eq:laxpiii}
\begin{align}
    Y(z+1)=A(z)Y(z),\label{eq:laxpiiiA}\\
    D_t Y(z)=B(z)Y(z),\label{eq:laxpiiiB}
\end{align}
\end{subequations}
where the coefficient matrix $A(z)$ is a $2\times 2$ matrix polynomial of degree two in $z$,
\begin{equation}\label{eq:piiiAseries}
        A(z)=A_0+z A_1+z^2A_2,
    \end{equation}
    characterised by the following three properties.
\begin{enumerate}[label=(\roman*)]
\item The leading order coefficient is given by
\begin{equation*}
    \qquad A_2=\begin{bmatrix}
            1 & 0\\
            0 & 0
        \end{bmatrix}.
\end{equation*}
    \item The trace of $A(z)$ has leading order behaviour
   \begin{equation}\label{eq:traceinfpiii}
    \begin{aligned}
     \operatorname{Tr}A(z)=z^2+(\alpha_1+\alpha_2-1)z+\mathcal{O}(1),
    \end{aligned}
   \end{equation}
    as $z\rightarrow \infty$.
    \item The determinant of $A(z)$ equals
\begin{equation}\label{eq:coefmatrixdetpiii}
    |A(z)|=-t\,z(z+\alpha_2).
\end{equation}
\end{enumerate}
In other words, the parameter $\alpha_2$ can be thought of as controlling the location of the zero of the determinant of $A(z)$, whereas $\{t,\alpha_1\}$ characterises the singularity at $z=\infty$ of equation \eqref{eq:laxpiiiA}. We use standard coordinates $\{f,g,w\}$ on the coefficient matrix, defined by 
\begin{equation*}
    A_{12}(z)=w(z-f\,g),\qquad A_{11}(z)|_{z=fg}\,=-t\,g,
\end{equation*}
where $A_{12}(z)$ denotes the $(1,2)$-entry of $A(z)$ and $A_{11}(z)|_{z=fg}$ denotes the $(1,1)$-entry of $A(z)$ evaluated at $z=fg$. 
These coordinates combined with properties (i)--(iii) result in $A_1$ and $A_0$ in \eqref{eq:piiiAseries} to be given by
\begin{align*}
A_1&=\begin{bmatrix}
 \alpha_1+\alpha_2-1 & w\\
 [t+f(\alpha_2+fg)]\,w^{-1} & 0
\end{bmatrix},\\
A_0&=\begin{bmatrix}
-gh & -f\, g\, w\\
(\alpha_2+fg)h\,w^{-1}& f(\alpha_2+fg)
\end{bmatrix},
\end{align*}
where,
\[ h = t+(\alpha_1+\alpha_2-1)f+f^2g.\]
In particular, $A(z)$ is a matrix polynomial in $z$ with coefficient entries from $\mathbb{Z}[f,g,\alpha_1,\alpha_2,t,w,w^{-1}]$.

The corresponding degree one matrix polynomial $B(z)$ in \eqref{eq:laxpiiiB} is given by,
\begin{equation*}
    B(z)=-\frac{1}{t}\begin{bmatrix}
 0 & w \\
 [t+f(\alpha_2+fg)]\,w^{-1} & 0 \\
\end{bmatrix}+\frac{z}{t}\begin{bmatrix}
 0 & 0 \\
 0 & 1\\
\end{bmatrix}.
\end{equation*}

Compatibility of the Lax pair \eqref{eq:laxpiii} is equivalent to $\Pthree^{D_6}$ and the following ODE for the auxiliary variable $w$,
\begin{equation*}
    \frac{D_tw}{w}=\frac{f\,g+\alpha_1+\alpha_2-1}{t}.
\end{equation*}

\subsection{The Hamiltonian}Whilst the trace of the product of $A(z)$ evaluated at successive integers,
$$\operatorname{Tr}[A(n-1)\cdot A(n-2)\cdot\ldots\cdot A(1)\cdot A(0)],$$
gives an integral of motion in characteristic $p$ when $n=p$ as in Theorem \ref{thm:gen trace}, the case $n=1$ is related to the Hamiltonian of $\Pthree^{D_6}$. Namely,
\begin{equation*}
    H:=f\,g-\operatorname{Tr}A_0=t\,g+\alpha_1 f g+\alpha_2f(g-1)+f^2g(g-1),
\end{equation*}
is the Hamiltonian of $\Pthree^{D_6}$, that is,
\begin{align*}
    t\,D_tf&=+\frac{\partial H}{\partial g},\\
    t\,D_tg&=-\frac{\partial H}{\partial f},
\end{align*}
are respectively equivalent to the equations for $D_tf$ and $D_tg$ in \eqref{eq:piii}.

\subsection{Integrals in finite characteristic}
Below are the integrals of motion $\mathcal{I}_p^{(\rm{IIID}_6)}$, $p=2,3$, as defined in Theorem \ref{thm:gen trace}.
\begin{align*}
\mathcal{I}_2^{(\rm{IIID}_6)} =&\,+f^4 (g^2+g^4)+f^2 (\alpha_{2}^2+g (1+\alpha_{1}+\alpha_{2})+g^2 (1+\alpha_{1}+\alpha_{1}^2+\alpha_{2}+\alpha_{2}^2))\\
&\,+f (\alpha_{2} (1+\alpha_{1}+\alpha_{2})+g (\alpha_{1}+\alpha_{1}^2+\alpha_{2}+\alpha_{2}^2)) \\
&\,+t (g^2 t+\alpha_{2}+g (\alpha_{1}+\alpha_{2})) ,\\
\mathcal{I}_3^{(\rm{IIID}_6)} =&\,+f^6 (g^3+2 g^6)+2 f^4 g^2 (1+g+g^2)+f^3 (g \alpha_{2}+\alpha_{2}^3+g^2 (2 \alpha_{1}+\alpha_{2})+g^3 (\alpha_{1}+2 \alpha_{1}^3+\alpha_{2}+2 \alpha_{2}^3))\\
&\,+f^2 (2 g^2+g^3 t+2 \alpha_{2}^2+g (1+2 t+2 \alpha_{1}^2+\alpha_{2}^2))\\
&\,+f (g^2 t (2+\alpha_{1}+\alpha_{2})+\alpha_{2} (1+2 t+2 \alpha_{1}^2+\alpha_{1} \alpha_{2}+2 \alpha_{2}^2)+g (t+2 \alpha_{1}+t \alpha_{1}+\alpha_{1}^3+2 \alpha_{2}+\alpha_{2}^3))\\
&\,+t (2 g^2 t+2 g^3 t^2+2 \alpha_{2} (1+\alpha_{1}+\alpha_{2})+g (t+\alpha_{1}+\alpha_{1}^2+\alpha_{2}+2 \alpha_{1} \alpha_{2}+\alpha_{2}^2)) .
\end{align*}

\section{The Third Painlev\'e equation of $D_7$ type}\label{sect:p3d7}
\subsection{The nonlinear system}
Fix a field $\mathbf{k}$ and consider the field of rational functions $\mathbf{k}(f,g,\alpha_1,t)$ with the $\mathbf{k}$-linear derivation $D_t$ specified by
 \begin{equation}\label{eq:piiid7}
 \Pthree^{D_7}:\quad           \begin{cases}
       t\, D_t f=+t+f(\alpha_1+2\,f\,g), &   D_t \alpha_1=0, \\
  t\, D_t \hspace{0.2mm}g\hspace{0.1mm}=-1-g(\alpha_1+2\,f\,g), &   D_t \hspace{0.4mm}t\hspace{0.4mm}=1. 
    \end{cases}
    \end{equation}
This system of ODEs is known as the third Painlev\'e equation of $D_7$ type (in system form). Assuming the characteristic of $\mathbf{k}$ is not $2$, eliminating $g$ from equations \eqref{eq:piiid7}, gives the following scalar form of the third Painlev\'e equation of $D_7$ type for $f$,
\begin{equation*}
    D_t^2f=\frac{(D_tf)^2}{f}-\frac{1}{t}D_tf-\frac{2}{t^2}f^2+\frac{1-\alpha_1}{t}-\frac{1}{f}.
\end{equation*}

\subsection{The Lax pair} \label{subsect:linP3d7}
We consider the following difference-differential Lax pair, which we obtained by matching up an ansatz system form with the scalar Lax pair in \cite{KNY2017} for $\Pthree^{D_7}$,
\begin{subequations}\label{eq:laxpiiid7}
\begin{align}
    Y(z+1)=A(z)Y(z),\label{eq:laxpiiid7A}\\
    D_t Y(z)=B(z)Y(z),\label{eq:laxpiiid7B}
\end{align}
\end{subequations}
where the coefficient matrix $A(z)$ is a $2\times 2$ matrix polynomial of degree two in $z$,
\begin{equation}\label{eq:piiid7Aseries}
        A(z)=A_0+z A_1+z^2A_2,
    \end{equation}
    characterised by the following three properties.
\begin{enumerate}[label=(\roman*)]
\item The leading order coefficient is given by
\begin{equation*}
    \qquad A_2=\begin{bmatrix}
            1 & 0\\
            0 & 0
        \end{bmatrix}.
\end{equation*}
    \item The trace of $A(z)$ has leading order behaviour
   \begin{equation}\label{eq:traceinfpiiid7}
    \begin{aligned}
     \operatorname{Tr}A(z)=z^2+(\alpha_1-1)z+\mathcal{O}(1),
    \end{aligned}
   \end{equation}
    as $z\rightarrow \infty$.
    \item The determinant of $A(z)$ equals
\begin{equation}\label{eq:coefmatrixdetpiiid7}
    |A(z)|=t\,z.
\end{equation}
\end{enumerate}
In other words, the parameters $\{t,\alpha_1\}$ characterise the singularity at $z=\infty$ of equation \eqref{eq:laxpiiid7A}. We use standard coordinates $\{f,g,w\}$ on the coefficient matrix, defined by 
\begin{equation*}
    A_{12}(z)=w(z-f\,g),\qquad A_{11}(z)|_{z=fg}\,=-t\,g,
\end{equation*}
where $A_{12}(z)$ denotes the $(1,2)$-entry of $A(z)$ and $A_{11}(z)|_{z=fg}$ denotes the $(1,1)$-entry of $A(z)$ evaluated at $z=fg$. 
These coordinates combined with properties (i)--(iii) result in $A_1$ and $A_0$ in \eqref{eq:piiid7Aseries} to be given by
\begin{align*}
A_1&=\begin{bmatrix}
 \alpha_1-1 & w\\
 -f\,w^{-1} & 0
\end{bmatrix},\\
A_0&=\begin{bmatrix}
-g\,[t+(\alpha_1-1)f+f^2g] & -f\, g\, w\\
-[t+(\alpha_1-1)f+f^2g]\,w^{-1}& -f
\end{bmatrix}.
\end{align*}
In particular, $A(z)$ is a matrix polynomial in $z$ with coefficient entries from $\mathbb{Z}[f,g,\alpha_1,t,w,w^{-1}]$.

The corresponding degree one matrix polynomial $B(z)$ in \eqref{eq:laxpiiid7B} is given by,
\begin{equation*}
    B(z)=\frac{1}{t}\begin{bmatrix}
 0 & -w \\
 f\,w^{-1} & 0 \\
\end{bmatrix}+\frac{z}{t}\begin{bmatrix}
 0 & 0 \\
 0 & 1\\
\end{bmatrix}.
\end{equation*}

Compatibility of the Lax pair \eqref{eq:laxpiiid7} is equivalent to $\Pthree^{D_7}$ and the following ODE for the auxiliary variable $w$,
\begin{equation*}
    \frac{D_tw}{w}=\frac{f\,g+\alpha_1-1}{t}.
\end{equation*}

\subsection{The Hamiltonian}Whilst the trace of the product of $A(z)$ evaluated at successive integers,
$$\operatorname{Tr}[A(n-1)\cdot A(n-2)\cdot\ldots\cdot A(1)\cdot A(0)],$$
gives an integral of motion in characteristic $p$ when $n=p$ as in Theorem \ref{thm:gen trace}, the case $n=1$ is related to the Hamiltonian of $\Pthree^{D_7}$. Namely,
\begin{equation*}
    H:=f\,g-\operatorname{Tr}A_0=f+t\,g+f\,g(\alpha_1+f\,g),
\end{equation*}
is the Hamiltonian of $\Pthree^{D_7}$, that is,
\begin{align*}
    t\,D_tf&=+\frac{\partial H}{\partial g},\\
    t\,D_tg&=-\frac{\partial H}{\partial f},
\end{align*}
are respectively equivalent to the equations for $D_tf$ and $D_tg$ in \eqref{eq:piiid7}.

\subsection{Integrals in finite characteristic}
Below are the integrals of motion $\mathcal{I}_p^{(\rm{IIID}_7)}$, $p=2,3,5$, as defined in Theorem \ref{thm:gen trace}.
\begin{align*}
\mathcal{I}_2^{(\rm{IIID}_7)} =\,&+f^4 g^4+f^2g^2(\alpha _1^2+\alpha _1+2)+f^2+\left(\alpha _1+1\right) f \left(\alpha _1 g+1\right)
+t \left(t g^2+\alpha _1 g+1\right),\\
\mathcal{I}_3^{(\rm{IIID}_7)} =\,&-f^6 g^6-f^4 g^4-\left(\alpha_1^3-\alpha_1\right) f^3 g^3+f^3 \left(g^2-1\right)+tf^2 g^3 -t^3g^3-f^2 g^2-f^2+\alpha_1 f g (f+t g+\alpha_1^2-1)\\
&-t  (t+f)g^2+t f g+\left(\alpha_1^2-1\right) f+\alpha_1 (\alpha_1+1) t g
+(\alpha_1+1) t,\\
 \mathcal{I}_5^{(\rm{IIID}_7)} =\,&
 -f^{10} g^{10}+2 f^6 g^6 
 -\left(\alpha_1^5-\alpha_1\right) f^5 g^5 +f^4 g^4 (t g+f)
 +2 f^3 g^3 (\alpha_1 f+(\alpha_1-1) t g)+f^2 g^2 (t g+f)^2\\
 & -f^2 g^2 \left(\alpha_1^2 f+(\alpha_1-1)^2 t g\right)-f^5-t^5 g^5+f g \left(\alpha_1 f^2+(\alpha_1-1) t^2 g^2+f g (\alpha_1 t+2 t-1)\right)\\
 &+3 \alpha_1^3 t f g^2+3 \alpha_1^3 f^2 g+\alpha_1^2 t f g^2- \alpha_1 t f g^2+2 t^3 g^3+t^2 f g^2+t f^2 g+2 t f g^2+2 f^3+\alpha_1^5 f g- \alpha_1^2 t^2 g^2\\
 &+2 \alpha_1^2 t f g- \alpha_1^2 f^2+2 \alpha_1 t^2 g^2+3 \alpha_1 t f g- \alpha_1 f g- t^2 g^2- t f g+\alpha_1^4 t g+\alpha_1^4 f+\alpha_1^3 t g+\alpha_1^2 t g\\
 &- \alpha_1 t^2 g- \alpha_1 t f+\alpha_1 t g+2 t^2 g- t f- f+t+\alpha_1 t+\alpha_1^2 t+\alpha_1^3 t- t^2.
\end{align*}

\section{The Third Painlev\'e equation of $D_8$ type}\label{sect:p3d8}
\subsection{The nonlinear system}
Fix a field $\mathbf{k}$ and consider the field of rational functions $\mathbf{k}(f,g,t)$ with the $\mathbf{k}$-linear derivation $D_t$ specified by
 \begin{equation}\label{eq:piiid8}
 \Pthree^{D_8}:\quad           \begin{cases}
       t\, D_t f=f(1+2\,f\,g), &   D_t \hspace{0.4mm}t\hspace{0.4mm}=1, \\
  \displaystyle t\, D_t \hspace{0.2mm}g\hspace{0.1mm}=\frac{t}{f^2}-1-g(1+2\,f\,g). &   
    \end{cases}
    \end{equation}
This system of ODEs is known as the third Painlev\'e equation of $D_8$ type (in system form). Assuming the characteristic of $\mathbf{k}$ is not $2$, eliminating $g$ from equations \eqref{eq:piiid8}, gives the following scalar form of the third Painlev\'e equation of $D_8$ type for $f$,
\begin{equation*}
    D_t^2f=\frac{(D_tf)^2}{f}-\frac{1}{t}D_tf-\frac{2}{t^2}f^2+\frac{2}{t}.
\end{equation*}

\subsection{The Lax pair} \label{subsect:linP3d8}
We consider the following difference-differential Lax pair, which we obtained by matching up an ansatz system form with the scalar Lax pair in \cite{KNY2017} for $\Pthree^{D_8}$, though we note a system form is also given in \cite[\S 6]{adler94},
\begin{subequations}\label{eq:laxpiiid8}
\begin{align}
    Y(z+1)=A(z)Y(z),\label{eq:laxpiiid8A}\\
    D_t Y(z)=B(z)Y(z),\label{eq:laxpiiid8B}
\end{align}
\end{subequations}
where the coefficient matrix $A(z)$ is a $2\times 2$ matrix polynomial of degree two in $z$,
\begin{equation}\label{eq:piiid8Aseries}
        A(z)=A_0+z A_1+z^2A_2,
    \end{equation}
    characterised by the following three properties.
\begin{enumerate}[label=(\roman*)]
\item The leading order coefficient is given by
\begin{equation*}
    \qquad A_2=\begin{bmatrix}
            1 & 0\\
            0 & 0
        \end{bmatrix}.
\end{equation*}
    \item The trace of $A(z)$ has leading order behaviour
   \begin{equation}\label{eq:traceinfpiiid8}
    \begin{aligned}
     \operatorname{Tr}A(z)=z^2+\mathcal{O}(1),
    \end{aligned}
   \end{equation}
    as $z\rightarrow \infty$.
    \item The determinant of $A(z)$ equals
\begin{equation}\label{eq:coefmatrixdetpiiid8}
    |A(z)|=t.
\end{equation}
\end{enumerate}
In other words, the parameter $t$ completely characterises the singularity at $z=\infty$ of equation \eqref{eq:laxpiiid8A}. We use standard coordinates $\{f,g,w\}$ on the coefficient matrix, defined by 
\begin{equation*}
    A_{12}(z)=w(z-f\,g),\qquad A_{22}(z)|_{z=fg}\,=-f,
\end{equation*}
where $A_{12}(z)$ denotes the $(1,2)$-entry of $A(z)$ and $A_{22}(z)|_{z=fg}$ denotes the $(2,2)$-entry of $A(z)$ evaluated at $z=fg$. 
These coordinates combined with properties (i)--(iii) result in $A_1$ and $A_0$ in \eqref{eq:piiid8Aseries} to be given by
\begin{align*}
A_1&=\begin{bmatrix}
 0 & w\\
 -f\,w^{-1} & 0
\end{bmatrix},\\
A_0&=-\begin{bmatrix}
f^2g^2+t\,f^{-1} & f\, g\, w\\
f^2g\,w^{-1}& f
\end{bmatrix}.
\end{align*}
In particular, $A(z)$ is a matrix polynomial in $z$ with coefficient entries from $\mathbb{Z}[f,f^{-1},g,t,w,w^{-1}]$.

The corresponding degree one matrix polynomial $B(z)$ in \eqref{eq:laxpiiid8B} is given by,
\begin{equation*}
    B(z)=\frac{1}{t}\begin{bmatrix}
 0 & -w \\
 f\,w^{-1} & 0 \\
\end{bmatrix}+\frac{z}{t}\begin{bmatrix}
 0 & 0 \\
 0 & 1\\
\end{bmatrix}.
\end{equation*}

Compatibility of the Lax pair \eqref{eq:laxpiiid8} is equivalent to $\Pthree^{D_8}$ and the following ODE for the auxiliary variable $w$,
\begin{equation*}
    \frac{D_tw}{w}=\frac{f\,g}{t}.
\end{equation*}

\subsection{The Hamiltonian}Whilst the trace of the product of $A(z)$ evaluated at successive integers,
$$\operatorname{Tr}[A(n-1)\cdot A(n-2)\cdot\ldots\cdot A(1)\cdot A(0)],$$
gives an integral of motion in characteristic $p$ when $n=p$ as in Theorem \ref{thm:gen trace}, the case $n=1$ is related to the Hamiltonian of $\Pthree^{D_8}$. Namely,
\begin{equation*}
    H:=f\,g-\operatorname{Tr}A_0=f+t\,f^{-1}+f\,g+f^2g^2,
\end{equation*}
is the Hamiltonian of $\Pthree^{D_8}$, that is,
\begin{align*}
    t\,D_tf&=+\frac{\partial H}{\partial g},\\
    t\,D_tg&=-\frac{\partial H}{\partial f},
\end{align*}
are respectively equivalent to the equations for $D_tf$ and $D_tg$ in \eqref{eq:piiid8}.

\subsection{Integrals in finite characteristic}
Below are the integrals of motion $\mathcal{I}_p^{(\rm{IIID}_8)}$, $p=2,3,5$, as defined in Theorem \ref{thm:gen trace}.
\begin{align*}
\mathcal{I}_2^{(\rm{IIID}_8)} =&\,+ f^4 g^4+f^2 g^2+g^2+\frac{t}{f}+ \frac{t^2}{f^2},\\
\mathcal{I}_3^{(\rm{IIID}_8)} =&\, -f^6 g^6-f^4 g^4+f^3 \left(g^2-1\right)-f^2 \left(g^2-g+1\right)+t f g^2-t-\frac{t}{f}-\frac{ t^2}{f^2}-\frac{ t^3}{f^3},\\
 \mathcal{I}_5^{(\rm{IIID}_8)} =&\, -f^{10} g^{10}+2 f^6 g^6+f^5 \left(g^4-1\right)+f^4g^2 \left(2g+1\right)+f^3 \left(t g^4-g^2+ g+2\right)-f^2 \left( g^2(1+t)+2 g+1\right)\\
 &\,+2tf (g-1)+t^2 g^2+2 t-\frac{t(t+2)}{f}+2\frac{ t^3}{f^3}-\frac{t^5}{f^5}.
\end{align*}

\section{Conclusion}\label{sec:conclusion}
We collected all rank 2 difference-differential Lax pairs in the literature for the  Painlev\'e equations 
\begin{equation*}
    P_J,\quad J\in\{\rm{VI}, \rm{V}, \rm{IV}, \rm{III}^{D_6},\rm{III}^{D_7},\rm{III}^{D_8}\},
\end{equation*}
and put each in degree 2 system form. From these Lax pairs we obtained a countable list of integrals of motion for each Painlev\'e equation,
indexed by the primes, so that the entry indexed by prime $p$ is an integral of the relevant Painlev\'e equation in characteristic $p$.

Several natural directions for future work remain.
A rank 3 difference-differential Lax pair for $\Ptwo$ is available in the literature \cite{adler94}, and it would be interesting to apply our method in that setting. By contrast, the existence of a difference-differential Lax pair for $\Pone$ appears to be an open problem and correspondingly it is unclear whether $\Pone$ admits non-trivial integrals of motion in characteristic $p$. Another fundamental question is whether integrals can also be extracted from purely differential Lax pairs. It also remains to be determined whether the integrals found in this paper are minimal or can be further simplified.
Importantly, these integrals provide new tools for the arithmetic study of Painlev\'e dynamics in finite characteristic. In particular, they may be useful for the study of Painlev\'e dynamics over finite fields, potentially leading to applications in cryptography.

\appendix

\section{Proof of Lemma \ref{lem:trace}}\label{sec:trac_finite}
The goal of this section is to prove Lemma \ref{lem:trace}. The proof requires some preparatory lemmas and is given after Lemma \ref{lem:invpol}. First, recall the following well-known power sum congruence modulo $p$, see e.g. \cite{macmillansondow},
\begin{equation}\label{eq:powersumcongruence}
\sum_{n=0}^{p-1}n^k\, \equiv \begin{cases}
    -1 &\text{if $p-1\mid k$},\\
    0 &\text{if $p-1 \nmid k$}.
\end{cases}
\end{equation}

\begin{lemma}\label{lemma:z+n}
For any odd prime $p$ and integer $0\leq N\leq 2p-1$, we have the following identities in  $\mathbb{F}_p[z]$,
\begin{equation}\label{eq:sumidentity}
\sum_{n=0}^{p-1}(z+n)^N\, = \begin{cases}
0 &\text{if $0\leq N < p-1$},\\
-1 &\text{if $N=p-1 $},\\
0 &\text{if $p\leq N < 2p-2 $},\\
-1 &\text{if $ N = 2p-2 $},\\
z-z^p &\text{if $N=2p-1 $}.
\end{cases}
\end{equation}
\end{lemma}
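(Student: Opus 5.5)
The plan is to expand $(z+n)^N$ with the binomial theorem and then sum over $n$ using the power sum congruence \eqref{eq:powersumcongruence}. Concretely,
\[
\sum_{n=0}^{p-1}(z+n)^N=\sum_{k=0}^{N}\binom{N}{k}z^{N-k}\sum_{n=0}^{p-1}n^k .
\]
By \eqref{eq:powersumcongruence}, the inner sum vanishes unless $p-1\mid k$, in which case it equals $-1$. One caveat: for $k=0$ the convention $0^0=1$ gives $\sum_n n^0=p\equiv 0$, which is consistent with $p-1\nmid 0$ failing. So the $k=0$ term needs a separate look: the sum of $p$ ones is $0$ in $\mathbb{F}_p$, and I will treat $k=0$ as contributing nothing. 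Hence
\[
\sum_{n=0}^{p-1}(z+n)^N=-\sum_{\substack{1\le k\le N\\ p-1\mid k}}\binom{N}{k}z^{N-k}.
\]
Since $N\le 2p-1$, the only admissible values of $k$ are $k=p-1$ and $k=2p-2$.

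I then go through the cases.
\begin{itemize}
\item If $N<p-1$, no $k$ qualifies, so the sum is $0$.
\item If $N=p-1$, only $k=p-1$ qualifies, and the sum is $-\binom{p-1}{p-1}=-1$.
\item If $p\le N<2p-2$, only $k=p-1$ qualifies, and the sum is $-\binom{N}{p-1}z^{N-p+1}$. By Lucas' theorem, writing $N=p+r$ with $0\le r<p-2$ and $p-1=0\cdot p+(p-1)$, we get $\binom{N}{p-1}\equiv\binom{1}{0}\binom{r}{p-1}=0$ because $r<p-1$. So the sum is $0$.
\item If $N=2p-2$, both $k=p-1$ and $k=2p-2$ qualify. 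The $k=2p-2$ term gives $-1$. For $k=p-1$, Lucas' theorem with $2p-2=1\cdot p+(p-2)$ gives $\binom{2p-2}{p-1}\equiv\binom{1}{0}\binom{p-2}{p-1}=0$. The total is $-1$.
\item If $N=2p-1$, write $2p-1=1\cdot p+(p-1)$. Lucas' theorem gives $\binom{2p-1}{p-1}\equiv\binom{p-1}{p-1}=1$ and $\binom{2p-1}{2p-2}=2p-1\equiv -1$. The sum is therefore $-z^{p}+z$, as claimed.
\end{itemize}

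Most of this is routine. The point needing the most care is the $k=0$ term together with the $0^0$ convention in \eqref{eq:powersumcongruence}, since the congruence as stated would wrongly give $-1$ at $k=0$. I will state explicitly that $\sum_{n=0}^{p-1}1=p\equiv 0$. This is also where the hypothesis that $p$ is odd enters: it guarantees $p-1$, $2p-2$ and $2p-1$ are distinct from the degenerate small cases, so no two case ranges overlap. The binomial coefficients modulo $p$ are all handled uniformly by Lucas' theorem as above.
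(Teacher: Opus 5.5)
Your proof is correct and follows the paper's route: binomial expansion, the power-sum congruence, reduction to the $k=p-1$ and $k=2p-2$ terms, then evaluation of $\binom{N}{p-1}$ and $\binom{N}{2p-2}$ mod $p$. It improves on the paper by justifying those binomial values with Lucas' theorem and by treating the $k=0$ term explicitly, which the paper drops silently.

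One correction: the oddness of $p$ is needed exactly in your claim that only $k=p-1$ and $k=2p-2$ are admissible, since $3(p-1)\ge 2p>N$ requires $p\ge 3$. For $p=2$ and $N=3$, the extra value $k=3(p-1)=3$ appears (the paper's remark after the lemma). Oddness is not what keeps the case ranges from overlapping.
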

\begin{proof}
    We begin by expanding the summands on the left-hand side of equation \eqref{eq:sumidentity} using the binomial theorem,
    \begin{eqnarray}
        \sum_{n=0}^{p-1}(z+n)^N &=& \sum_{n=0}^{p-1} \sum_{k=0}^N \binom{N}{k}z^{N-k}n^k\nonumber \\
        &=& \sum_{k=0}^N \binom{N}{k}z^{N-k} \sum_{n=0}^{p-1} n^k\nonumber\\
        &=&-\binom{N}{p-1}z^{N-p+1}-\binom{N}{2p-2}z^{N-2p+2},\label{eq:2term}
    \end{eqnarray}
  where we applied \eqref{eq:powersumcongruence} to the inner summations in the last equality, with by definition $\binom{N}{k}=0$ when $0\leq N<k$, and we used that $p$ is an odd prime so that
$k=l(p-1)\leq N$ implies $l\leq 2$ for $l\in\mathbb{Z}$.
  Next, we apply the following identities in characteristic $p$,
\begin{equation*}
\binom{N}{p-1}\, = \begin{cases}
0 &\text{if $0\leq N < p-1$},\\
1 &\text{if $N=p-1 $},\\
0 &\text{if $p\leq N < 2p-2 $},\\
0 &\text{if $ N = 2p-2 $},\\
1 &\text{if $N=2p-1 $},
\end{cases},\qquad
\binom{N}{2p-2}\, = \begin{cases}
0 &\text{if $0\leq N < p-1$},\\
0 &\text{if $N=p-1 $},\\
0 &\text{if $p\leq N < 2p-2 $},\\
1 &\text{if $ N = 2p-2 $},\\
-1 &\text{if $N=2p-1 $},
\end{cases}
\end{equation*}
to obtain the formula in the lemma.   
\end{proof}

\begin{remark}
    When $p=2$, the right-hand side of equation \eqref{eq:2term}  gets an additional term
    \begin{equation*}
        -\binom{N}{3p-3}z^{N-3p+3},
    \end{equation*}
    which affects the result of the calculation when $N=2p-1=3=3p-3$, so that the corresponding sum on the left-hand side of \eqref{eq:sumidentity} evaluates to $1+z+z^2$ for $p=2$ and $N=3$.
\end{remark}

\begin{lemma}\label{lem:invpol} For any prime $p$,
    if a polynomial $f(z)\in \mathbb{F}_p[z]$ satisfies $f(z+1)=f(z)$, then it is a polynomial in $z^p-z$, i.e. $f(z)\in\mathbb{F}_p[z^p-z]$.
\end{lemma}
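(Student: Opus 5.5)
Let $Z=z^p-z$. The plan is to argue by strong induction on $\deg f$ using division with remainder by $Z$, which is monic of degree $p$ in $\mathbb{F}_p[z]$ and satisfies $Z(z+1)=(z+1)^p-(z+1)=z^p-z=Z(z)$ by the Freshman's dream.

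If $\deg f<p$ (including $f$ constant or zero), I claim $f$ is constant. Indeed $f(z+1)=f(z)$ gives $f(n)=f(0)$ for all $n\in\mathbb{F}_p$, so $f(z)-f(0)$ has the $p$ distinct roots $0,1,\ldots,p-1$. A nonzero polynomial of degree less than $p$ cannot have $p$ roots, so $f(z)-f(0)=0$. Hence $f=f(0)\in\mathbb{F}_p\subset\mathbb{F}_p[Z]$.

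For $\deg f\geq p$, divide: $f(z)=q(z)Z+r(z)$ with $\deg r<p$. Substituting $z\mapsto z+1$ and using $Z(z+1)=Z(z)$ gives $f(z)=q(z+1)Z+r(z+1)$. Since $\deg r(z+1)=\deg r<p$, uniqueness of division with remainder forces $q(z+1)=q(z)$ and $r(z+1)=r(z)$. By the base case $r$ is a constant. Since $\deg q=\deg f-p<\deg f$, the induction hypothesis gives $q\in\mathbb{F}_p[Z]$. Therefore $f=qZ+r\in\mathbb{F}_p[Z]$.

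The only point needing care is the degree-$<p$ case: it uses that the orbit of $0$ under $z\mapsto z+1$ has exactly $p$ elements in $\mathbb{F}_p$. The same argument works verbatim over any field of characteristic $p$, since $\mathbb{F}_p$ embeds in it.
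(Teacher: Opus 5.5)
Your proof is correct, and it takes a different route from the paper's. Both arguments induct on $\deg f$. The paper peels off the leading term. Comparing top coefficients in $0=f(z+1)-f(z)=d\,a_d z^{d-1}+\mathcal{O}(z^{d-2})$ shows that $p\mid d$, say $d=np$. Then $f-a_d(z^p-z)^n$ is again shift-invariant and of strictly lower degree, and one recurses.

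You instead divide by the monic invariant $Z$ and use uniqueness of the quotient and remainder to pass invariance to both. You then kill the remainder by root counting on the orbit $\{0,1,\dots,p-1\}$. Your base case (an invariant of degree $<p$ is constant) takes the place of the paper's observation that a nonconstant invariant has degree divisible by $p$.

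Each route has an advantage.
\begin{itemize}
\item The paper's leading-coefficient step works verbatim over any commutative $\mathbb{F}_p$-algebra, because a nonzero residue $d \bmod p$ is a unit. Your sentence ``a nonzero polynomial of degree less than $p$ cannot have $p$ roots'' needs an integral domain. It extends to general $\mathbb{F}_p$-algebras only once you note that the Vandermonde determinant on $0,\dots,p-1$ is a unit.
\item Your division argument makes the structure transparent: the ``digits'' of $f$ in base $Z$ are each invariant.
\end{itemize}
Your closing remark, that the argument works over any field of characteristic $p$, is worth keeping. That is the setting in which the lemma is actually used in the proof of Lemma~\ref{lem:trace}: there $\chi_p(z)$ has coefficients in $K$ (or in the subring generated by the entries of the $A_k$), not in $\mathbb{F}_p$, so the statement as formulated for $\mathbb{F}_p[z]$ does not literally cover that application.
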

\begin{proof}
    If $f(z)$ is constant the statement holds, so assume $f(z)$ is non-constant and write
    \begin{equation*}
        f(z)=a_0+a_1 z+\ldots+a_d z^d,
    \end{equation*}
    where $a_d\neq 0$ and $d\geq 1$ is the degree of $f(z)$. Then we have
    \begin{equation*}
        0=f(z+1)-f(z)=d\, a_d\,  z^{d-1}+\mathcal{O}(z^{d-2}),
    \end{equation*}
 as $z\rightarrow \infty$,   where $\mathcal{O}(z^{d-2})$ is a polynomial of degree at most $d-2$ in $z$. Therefore, $d\equiv0$ mod $p$, correspondingly take $n\geq 1$ such that $d=n\,p$.

    Now consider the polynomial
    \begin{equation*}
        \widetilde{f}(z)=f(z)-a_d(z^p-z)^n\in \mathbb{F}_p[z].
    \end{equation*}
    This polynomial is also invariant under $z\mapsto z+1$, but its degree is strictly smaller than the degree of $f(z)$. By recursively applying the above argument a finite number of times, we will end up with a constant polynomial and obtain $f(z)$ expressed as a polynomial in $z^p-z$. The lemma follows.
\end{proof}

Note that as an immediately corollary of Lemma \ref{lem:invpol} we can conclude that
    \begin{equation}\label{eq:polyproduct}
        \prod_{k=0}^{p-1} (z+k) = z^p - z,
    \end{equation}
since the left-hand side is invariant under $z\mapsto z+1$, its highest order term is $z^p$ and $z=0$ is a root.
We now have all the ingredients necessary to prove Lemma \ref{lem:trace}.

\begin{proof}[Proof of Lemma \ref{lem:trace}]
Define the polynomial
\[ \chi_p(z) = \operatorname{Tr}\left[A(z+p-1)\cdot A(z+p-2)\cdot\ldots\cdot A(z+1)\cdot A(z)\right] .\]
As the trace is invariant under circular shifts we can immediately deduce that $\chi_p(z+1) = \chi_p(z)$. Furthermore, as the degree of $A(z)$ is at most two, we conclude that the degree of $\chi_p(z)$ is bounded by $2p$. Thus, application of Lemma \ref{lem:invpol} yields
\begin{equation*}
    \chi_p(z)=c_0+c_1 (z^p-z)+c_2(z^p-z)^2,
\end{equation*}
for some coefficients $c_i$, $0\leq i\leq 2$, which are polynomials in the entries of the $A_k$, $0\leq k\leq 2$, over $\mathbb{F}_p$. It remains to determine $c_1$ and $c_2$.

We introduce the notation $\mathbf{j}=(j_0,\ldots,j_{p-1})\in \{0,1,2\}^p$ for length $p$ tuples of indices and define the set of indices that sum up to $d$,
  \begin{equation*}
      J_d=\{\mathbf{j}\in \{0,1,2\}^p:j_0+\ldots+j_{p-1}=d\},
  \end{equation*}
for $0\leq d\leq 2p$. We then have a cyclic action on $J_d$ generated by the permutation
\begin{equation*}
    \sigma: J_d\rightarrow J_d, \qquad \mathbf{j}\mapsto (j_1,\ldots,j_{p-1},j_0).
\end{equation*}
Fixed points of $\sigma$ are given by $\mathbf{j}'s$ with all coordinates equal,  $j_0=j_1=...=j_{p-1}$. This is only possible if $p$ divides $d$. As $p$ is a prime, every other orbit under $\sigma$ has length $p$.

We  decompose $\chi_p(z)$ as follows,
    \begin{equation}\label{eq:trace_decomp}
       \chi_p(z)=\sum_{d=0}^{2p} H_d(z),
    \end{equation}
where
\begin{equation}\label{eq:H(z) sum}
   H_d(z)=\sum_{\mathbf{j}\in J_d}g_\mathbf{j}(z) \operatorname{Tr}\left[A_{j_{p-1}}A_{j_{p-2}}\cdot\ldots\cdot A_{j_1}A_{j_0}\right],
\end{equation}
with
\begin{equation*}
    g_\mathbf{j}(z)=(z+p-1)^{j_{p-1}}(z+p-2)^{j_{p-2}}\cdot\ldots\cdot (z+1)^{j_1}(z+0)^{j_0}.
\end{equation*}
Note that by construction, the degree of $H_d(z)$ is at most $d$. This means that to compute $c_1$ and $c_2$ we only need to consider contributions from $H_d(z)$ for $p\leq d \leq 2p$ in equation \eqref{eq:trace_decomp}. We work out these contributions separately for the cases $d=2p$, $d=2p-1$, $p<d<2p-1$ and $d=p$.\\
$\mathbf{d=2p}$: We have $J_{2p}=\{(2,2,\ldots,2,2)\}$ is a singleton and
    \begin{eqnarray}
        H_{2p}(z) &=& (z+p-1)^2...(z+1)^2z^2\operatorname{Tr}A_{2}^p,\nonumber\\
        &=& [(z+p-1)...(z+1)z]^2\operatorname{Tr}A_{2}^p,\nonumber\\
        &=& (z^p-z)^2\operatorname{Tr}A_{2}^p,\label{eq:c3}
    \end{eqnarray} 
    where the last equality is a direct application of equation \eqref{eq:polyproduct}. We conclude that $c_2 = \operatorname{Tr}A_{2}^p$.\\
$\mathbf{d=2p-1}$: In this case $J_{d}$ has $p$ elements, each with all indices equal to $2$ but one (whose value is $1$), so that
    \begin{align}
        H_{2p-1}(z) &= \left( \sum_{i=0}^{p-1}\frac{\prod_{k=0}^{p-1} (z+k)^2}{z+i} \right) \operatorname{Tr}A_{2}^{p-1}A_1
        = \left(\prod_{k=0}^{p-1} (z+k)\right) \left( \sum_{i=0}^{p-1}\frac{\prod_{k=0}^{p-1} (z+k)}{z+i} \right) \operatorname{Tr}A_{2}^{p-1}A_1\nonumber\\
        &= (z^p-z)  \frac{\partial}{\partial z} \left( \prod_{k=0}^{p-1} (z+k)\right) \operatorname{Tr}A_{2}^{p-1}A_1=  (z^p-z)  \frac{\partial}{\partial z}(z^p-z) \operatorname{Tr}A_{2}^{p-1}A_1\nonumber\\
        &=  -(z^p-z)\operatorname{Tr}A_{2}^{p-1}A_1. \label{eq:c2(i)}
    \end{align} 
    The first equality follows from the fact that the trace is invariant under circular shifts. The second to last equality follows from \eqref{eq:polyproduct}.
    \item $\mathbf{p< d <2p-1}$: 
As $p$ does not divide $d$, every orbit under $\sigma$ in $J_d$ has length $p$. Let us choose a set of representatives of the corresponding quotient,
\begin{equation*}
    J_d/\langle \sigma \rangle =\{[\mathbf{j}^{(r)}]:1\leq r \leq R\},
\end{equation*}
where $R=|J_d|/p$. We have, by the cycling invariance of traces of products of matrices,
\begin{equation*}
   H_d(z)=\sum_{r=1}^R\Big( \sum_{n=0}^{p-1}g_{\mathbf{j}^{(r)}}(z+n) \Big) \operatorname{Tr}\left[A_{j_{p-1}^{(r)}}A_{j_{p-2}^{(r)}}\cdot\ldots\cdot A_{j_1^{(r)}}A_{j_0^{(r)}}\right].
\end{equation*}
By Lemma \ref{lemma:z+n}, we know that each of the inner sums
\begin{equation*}
    \sum_{n=0}^{p-1} g_{\mathbf{j}^{(r)}}(z+n),
\end{equation*}
is a constant. In particular, they do not contribute to $c_1$ or $c_2$.\\
$\mathbf{d=p}$: Following the same arguments as those used for $p<d<2p-1$, we conclude that the only term that will contribute to $c_1$ is the fixed point $\mathbf{j} = (1,1,...,1)$. Thus, using \eqref{eq:polyproduct} we conclude that
\begin{equation}
    H_{p}(z) = (z^p-z)\operatorname{Tr}A_1^p+c, \label{eq:c2(ii)}
\end{equation}
for some constant $c$.

Combining equations \eqref{eq:c3}, \eqref{eq:c2(i)} and \eqref{eq:c2(ii)}, we obtain  equation \eqref{eq:trace with Ak} and thus the lemma.
\end{proof}

\section{Testing an integral on an algebraic solution}\label{sec:algebraic}
In this section, we check that $\mathcal{I}_7^\textrm{VI}$ is indeed an integral of motion for an algebraic solution of $\Psix$ over $\mathbb{F}_7$. We obtain the solution by reducing modulo $7$ the algebraic solution of $\Psix$ over $\mathbb{Q}$ named \textit{solution} $1$ in \cite{lisovyyalgebraic}, which is solution $20$ in \cite{boalchico}, given by
\begin{align*}
    f&=\frac{2(5s-2)(s^2+s+7)}{s(s+5)(4s^2-5s+10)},\\
    g&=\frac{s(s-4)(s+5)(4s^2-5s+10)}{15(s+2)^2(4s-7)(5s-2)},\\
    t&=\frac{27(5s-2)^2}{(s+5)(4s^2-5s+10)^2},
\end{align*}
with parameter values
\begin{equation*}
    \alpha_0=\frac{1}{3},\quad \alpha_1=-\frac{1}{3},\quad \alpha_2=\frac{1}{5},\quad \alpha_3=\frac{1}{5},\quad \alpha_4=\frac{2}{5}. 
\end{equation*}
It simplifies slightly when reduced modulo $7$, becoming
\begin{equation}\label{eq:algsolred7}
\begin{aligned}
    f&=\frac{6(s+1)^2}{(s+5)(s^2+4s+6)},\\
    g&=\frac{3(s+3)(s+5)(s^2+4s+6)}{(s+1)(s+2)^2},\\
    t&=\frac{5(s+1)^2}{(s+5)(s^2+4s+6)^2},
\end{aligned}
\end{equation}
with parameter values
\begin{equation}\label{eq:paramalg}
    \alpha_0=5,\quad \alpha_1=2,\quad \alpha_2=3,\quad \alpha_3=3,\quad \alpha_4=6. 
\end{equation}
Note that $f$ and $g$ are algebraic over $\mathbb{F}_7(t)$; we have
\begin{equation}\label{eq:algebraicimplicit}
f^5+5 f^4 t+2 f^3 t+3 f^2 t+5 f t^2+5 t^3=0,\quad f^2 g+6\, t f g+3 f+6 t=0.
\end{equation}

For the parameter values \eqref{eq:paramalg}, the integral of motion $\mathcal{I}_7^\textrm{VI}$ factorises as follows,
\begin{equation*}
    \mathcal{I}_7^\textrm{VI}=I_1^2I_2I_3I_4I_5,
\end{equation*}
where
\begin{align*}
    I_1=\,&f^2 g+6\, t f g +3 f+6 t,\\
    I_2=\,&f^2 g+6 f g+5 f+6,\\
    I_3=\,&6 t f^3 g^3+2 t f^2 g^3+5 t f^2 g^2+6 t f g^3+3 t f g^2+2 t f g+6 t g^2+3 t g+4 t+f^4 g^3+5 f^3 g^3+4 f^3 g^2+f^2 g^3\\
    &+4 f^2 g+3 f g^2+2 f g+3 f+3 g+1,\\
    I_4=\,&4 + 3 t + 3 t^2 + 6 t f + 5 f^2 + 3 t g + 4 t^2 g + f g + 5 t^2 f g + 
 2 f^2 g + 6 f^3 g + t^2 g^2 + t f g^2 + 4 t^2 f g^2 + 5 f^2 g^2\\
 &+ 
 2 t^2 f^2 g^2 + 3 f^3 g^2 + 6 t f^3 g^2 + 6 f^4 g^2 + t^2 f g^3 + 
 5 t f^2 g^3 + 5 t^2 f^2 g^3 + f^3 g^3 + 4 t f^3 g^3 + t^2 f^3 g^3 + 
 5 f^4 g^3\\
 &+ 5 t f^4 g^3 + f^5 g^3,\\
 I_5=\,&t^2 f^4 g^5+5 t^2 f^3 g^5+3 t^2 f^3 g^4+t^2 f^2 g^5+2 t^2 f^2 g^4+5 t^2 f^2 g^3+2 t^2 f g^4+3 t^2 f g^3+5 t^2 f g^2+t^2 g^3+6 t^2 g^2\\
 &+5 t f^5 g^5+4 t f^4 g^5+4 t f^4 g^4+5 t f^3 g^5+4 t f^3 g^4+6 t f^2 g^4+2 t f^2 g^3+2 t f^2 g^2+t f g^3+5 t f g^2+3 t f g+5 t g\\
 &+f^6 g^5+5 f^5 g^5+f^4 g^5+f^4 g^4+6 f^4 g^3+6 f^3 g^4+f^3 g^3+3 f^3 g^2+2 f^2 g^3+4 f^2 g+4 f g+5 f+5 g+4.
\end{align*}
It follows from the second algebraic equation in \eqref{eq:algebraicimplicit} that $I_1=0$. So we see that $\mathcal{I}_7^\textrm{VI}$ is indeed constant, and turns out to equal zero, for this particular solution of $\Psix$ over $\mathbb{F}_7$.


\begin{bibdiv}
 \begin{biblist}


\bib{adler94}{article}{
   author={Adler, V. \`E.},
   title={Nonlinear chains and Painlev\'e{} equations},
   journal={Phys. D},
   volume={73},
   date={1994},
   number={4},
   pages={335--351}
}



\bib{arinkinbor2006}{article}{
   author={Arinkin, D.},
   author={Borodin, A.},
   title={Moduli spaces of $d$-connections and difference Painlev\'e{}
   equations},
   journal={Duke Math. J.},
   volume={134},
   date={2006},
   number={3},
   pages={515--556}
}



\bib{boalchico}{article}{
   author={Boalch, P.},
   title={The fifty-two icosahedral solutions to Painlev\'e{} VI},
   journal={J. Reine Angew. Math.},
   volume={596},
   date={2006},
   pages={183--214},
   issn={0075-4102}
}


\bib{borodin2006}{article}{
   author={Arinkin, D.},
   author={Borodin, A.},
   title={Moduli spaces of $d$-connections and difference Painlev\'e{}
   equations},
   journal={Duke Math. J.},
   volume={134},
   date={2006},
   number={3},
   pages={515--556}
}



\bib{buium_manin_15}{article}{
   author={Buium, A.},
   author={Manin, Yuri I.},
   title={Arithmetic differential equations of Painlev\'e{} VI type},
   conference={
      title={Arithmetic and geometry},
   },
   book={
      series={London Math. Soc. Lecture Note Ser.},
      volume={420},
      publisher={Cambridge Univ. Press, Cambridge},
   },
   isbn={978-1-107-46254-0},
   date={2015},
   pages={114--138},
   review={\MR{3467121}},
}





\bib{NIST:DLMF}{misc}{
       title = {NIST Digital Library of Mathematical Functions},
howpublished = {\url{https://dlmf.nist.gov/}, Release 1.2.6 of 2026-03-15},
         url = {https://dlmf.nist.gov/},
        note = {F.~W.~J. Olver, A.~B. {Olde Daalhuis}, D.~W. Lozier, B.~I. Schneider,
                R.~F. Boisvert, C.~W. Clark, B.~R. Miller, B.~V. Saunders,
                H.~S. Cohl, and M.~A. McClain, eds.}
}




\bib{joshiroffelsen}{article}{
  author = {Joshi, N.},
  author = {Roffelsen, P.},
  title = {On integrals of non-autonomous dynamical systems in finite characteristic},
  journal = {arXiv preprint},
  year = {2026},
  eprint = {2602.09298},
  archivePrefix = {arXiv},
  primaryClass = {nlin.SI},
  doi = {10.48550/arXiv.2602.09298}
}



\bib{KNY2017}{article}{
  author  = {Kajiwara, K.},
  author  = {Noumi, M.},
  author  = {Yamada, Y.},
  title   = {Geometric Aspects of Painlev{\'e} Equations},
  journal = {J. Phys. A},
  volume  = {50},
  number  = {7},
  year    = {2017}
}


\bib{kankisigma}{article}{
   author={Kanki, M.},
   title={Integrability of discrete equations modulo a prime},
   journal={SIGMA Symmetry Integrability Geom. Methods Appl.},
   volume={9},
   date={2013},
   pages={Paper 056, 8},
   review={\MR{3141527}}
}


\bib{kankirims}{article}{
   author={Kanki, M.},
   author={Mada, J.},
   author={Tokihiro, T.},
   title={Discrete Painlev\'e{} equations and discrete KdV equation over
   finite fields},
   conference={
      title={The breadth and depth of nonlinear discrete integrable systems},
   },
   book={
      series={RIMS K\^oky\^uroku Bessatsu},
      volume={B41},
      publisher={Res. Inst. Math. Sci. (RIMS), Kyoto},
   },
   date={2013},
   pages={125--145},
}


\bib{lisovyyalgebraic}{article}{
   author={Lisovyy, O.},
   author={Tykhyy, Y.},
   title={Algebraic solutions of the sixth Painlev\'e{} equation},
   journal={J. Geom. Phys.},
   volume={85},
   date={2014},
   pages={124--163}
}

\bib{macmillansondow}{article}{
   author={MacMillan, K.},
   author={Sondow, J.},
   title={Proofs of power sum and binomial coefficient congruences via
   Pascal's identity},
   journal={Amer. Math. Monthly},
   volume={118},
   date={2011},
   number={6},
   pages={549--551}
}

\bib{nagloo_21}{article}{
   author={Nagloo, J.},
   title={Model theory and differential equations},
   journal={Notices Amer. Math. Soc.},
   volume={68},
   date={2021},
   number={2},
   pages={177--185}
}

\bib{nagloo_pillay_14}{article}{
   author={Nagloo, J.},
   author={Pillay, A.},
   title={On the algebraic independence of generic Painlev\'e{}
   transcendents},
   journal={Compos. Math.},
   volume={150},
   date={2014},
   number={4},
   pages={668--678}
}






\bib{nakazono_lax}{article}{
   author={Nakazono, N.},
   title={Reduction of lattice equations to the Painlev\'e{} equations: $\rm
   P_{IV}$ and $\rm P_V$},
   journal={J. Math. Phys.},
   volume={59},
   date={2018},
   number={2},
}





\bib{ormerodrains2017}{article}{
   author={Ormerod, C. M.},
   author={Rains, E.},
   title={A symmetric difference-differential Lax pair for Painlev\'e{} VI},
   journal={J. Integrable Syst.},
   volume={2},
   date={2017},
   number={1},
   pages={xyx003, 20}
}


\bib{joshiroffelsencrystal}{article}{
   author={Joshi, N.},
   author={Roffelsen, P.},
   title={On the crystal limit of the $q$-difference sixth Painlev\'e{}
   equation},
   journal={J. Nonlinear Sci.},
   volume={35},
   date={2025},
   number={1},
   pages={Paper No. 31, 30}
}

\bib{joshiroffelsen1}{article}{
  author = {Joshi, N.},
  author = {Roffelsen, P.},
  title = {Arithmetic dynamics of a discrete Painlev\'e equation},
  journal = {Journal of Physics A: Mathematical and Theoretical},
  year = {2026},
volume = {59},
number = {19}
}




\bib{umemura1998painleve}{article}
{
	author = {Umemura, H.},
	journal = {Sugaku Expositions},
	number = {1},
	pages = {77--100},
	publisher = {Providence, RI, USA: The Society, c1988-},
	title = {{P}ainlev\'e equations and classical functions},
	volume = {11},
	year = {1998}}



\end{biblist}
\end{bibdiv}
\end{document}